\documentclass[runningheads]{llncs}
\usepackage[T1]{fontenc}
\usepackage{amsmath,amsfonts}
\usepackage{algorithmic}
\usepackage{algorithm}
\usepackage{xspace}
\newcommand{\HwPTF}{HW-PTF}

\newcommand{\OTHERD}{\mbox{\textsc{M2D}}\xspace}
\newcommand{\HALFD}{\mbox{\textsc{M2H$_d$}}\xspace}
\newcommand{\ATDIST}{\mbox{A{\scriptsize{T-}}D}\xspace}

\newcommand{\STAY}{\mbox{\textsc{Stay}}\xspace}
\newcommand{\OTHER}{\mbox{\textsc{M2O}}\xspace}
\newcommand{\HALF}{\mbox{\textsc{M2H}}\xspace}

\newcommand{\BLACK}{\mbox{\textsc{Black}}\xspace}
\newcommand{\WHITE}{\mbox{\textsc{White}}\xspace}

\newtheorem{observation}{Observation}

\usepackage{geometry}
\usepackage{graphicx}
\begin{document}
\title{Hardwired Pattern Formation by Mobile Robots with Common Unit Distance} 
%
%

\author{Yuta Kojima\inst{1} \and
S\'{e}bastien Tixeuil\inst{2}\orcidID{0000-0002-0948-7172} \and
Yukiko Yamauchi\inst{3}\orcidID{0009-0009-8459-6676}}

\authorrunning{Y. Kojima et al.}
%

\institute{Kyushu University, Japan \email{kojima.yuta.457@s.kyushu-u.ac.jp} \\ 
Sorbonne Universit\'{e}, CNRS, LIP6, IUF, France \email{Sebastien.Tixeuil@lip6.fr} \\ 
Kyushu University, Japan \email{yamauchi@inf.kyushu-u.ac.jp}}
\maketitle              
\begin{abstract}
The \emph{pattern formation (PTF) problem} requires 
mobile robots to form a specified target pattern. 
Existing papers investigated the PTF problem and 
revealed the effect of obliviousness and synchronization on 
distributed coordination of mobile robots. 
However, the PTF problem allows translation, rotation, 
and scaling of the target pattern. 
In this paper, we introduce a novel pattern formation problem, 
called the \emph{hardwired pattern formation (\HwPTF) problem} 
that requires the robots to form a given target pattern 
in a specified size. 
Although two oblivious semi-synchronous robots 
cannot solve the \HwPTF \ problem of multiplicity two 
(i.e., the rendezvous problem), 
we show that they can solve the \HwPTF \ problem without multiplicity. 
We also show that two oblivious asynchronous robots equipped with lights 
can solve the \HwPTF \ problem, 
while oblivious asynchronous robots cannot.  
We finally present a size-adjusting algorithm for 
more than four oblivious semi-synchronous robots, 
that yields a \HwPTF \ algorithm 
when combined with some existing pattern formation algorithms.

\keywords{Mobile robots \and pattern formation \and unit distance}
\end{abstract}

\section{Introduction} 
The \emph{pattern formation (PTF) problem} has been one of the most important problems 
in distributed coordination of \emph{mobile robots}. 
The problem requires the robots to form a target pattern 
without any centralized control. 
Each robot is an autonomous mobile computing entity, 
that observes the positions of other robots, 
computes its next position, and 
moves toward the next position. 
The robots are \emph{anonymous} and \emph{uniform} 
in the sense that the robots are indistinguishable points 
and computes its next position by a common deterministic algorithm. 
The robots are not equipped with a common coordinate system 
nor memory to store past observations nor computation, i.e., 
they are \emph{disoriented} and \emph{oblivious}. 
Each robot performs a unit of action when it is activated 
and performs observation, computation, and movement. 
We consider three types of activation schedules, called 
the \emph{fully-synchronous} (FSYNC) model, the \emph{semi-synchronous} (SSYNC) model, 
and the \emph{asynchronous} (ASYNC) model. 
Because the robots do not have a common coordinate system, 
the pattern formation problem allows 
translation, rotation, and scaling of the target pattern. 
Existing literature showed 
the class of formable patterns and 
the effect of obliviousness and synchronization~\cite{FYOKY15,SY99,YS10}, 
and these results are further extended from 2D space to 3D space~\cite{YUKY17}. 
For example, Suzuki and Yamashita introduced the notion of \emph{symmetricity} 
of a set $P$ of points, 
which is essentially the rotational symmetry of $P$. 
Then they showed that the robots can form a target pattern $F$ 
from an initial configuration $I$ if and only if $\rho(I)$ divides $\rho(F)$ 
irrespective of obliviousness and synchronization. 
That is, the robots cannot resolve their initial rotational symmetry 
in their positions and local coordinate systems. 
However, there exists an exceptional case for the 
\emph{rendezvous problem}, that requires two robots to form a single 
point of multiplicity two. 
When the robots are fully-synchronized, 
they can solve the rendezvous problem by moving to the midpoint. 
Suzuki and Yamashita showed that the two robots cannot solve the 
rendezvous problem when the robots are SSYNC or ASYNC~\cite{SY99}. 
Hence the rendezvous problem separates computational power 
of FSYNC robots from that of SSYNC or ASYNC robots. 

In this paper, we introduce a novel pattern formation problem, 
called the \emph{hardwired pattern formation (\HwPTF) problem}, 
that requires the robots to form a given target pattern 
in a specified size. 
That is, the robots are required to reach a configuration 
congruent to the target pattern. 
We add the minimum requirement of a common unit distance 
for the robots to agree on the specified size. 
While the pattern formation problem has shed light to theoretical aspect 
of distributed coordination for mobile robots, 
its application to hardware robots, drones, and other mobile devices is limited. 
For example, consider surveillance of a large field by mobile sensors. 
Existing pattern formation algorithms~\cite{FYOKY15,SY99,YS10} 
enable the mobile sensors to form an appropriate pattern to cover the field, 
however its size may disrupt communication connectivity or sensing coverage. 
Size specification has many applications including 
engineering, nanomanufacturing, structural biology, and so on. 

\noindent{\bf Our contribution.} 
We show that the \HwPTF \ problem is 
both practically and theoretically important.  
First, we show that two SSYNC robots can solve 
the \HwPTF \ problem except the rendezvous problem. 
That is, two robots can stop at specified distance $d(\neq 0)$ from an 
arbitrary initial configuration without multiplicity. 
Our key idea is the \emph{Z-move} that enables two robots finish the \HwPTF \ 
under any SSYNC activation. 
We then show that two ASYNC robots cannot solve 
the \HwPTF \ problem for sufficiently large $d$. 
The Z-move plays an important role to show the impossibility 
because any SSYNC execution is also an ASNYC execution. 
That is, two ASYNC robots need the Z-move to finish \HwPTF. 
We will show that there exists an ASYNC activation 
that prevents two robots from solving the \HwPTF \ problem forever. 
These results also demonstrate that the \HwPTF \ problem for two robots 
separates computational power 
of oblivious SSYNC robots from that of oblivious ASYNC robots. 
We then show that oblivious ASYNC robots equipped with \emph{lights} with two colors 
can solve the \HwPTF \ problem. 
Our key technique is to establish an isomorphism between 
an ASYNC execution of an rendezvous algorithm such as~\cite{Her2Cols} 
and an ASYNC execution of a \HwPTF \ process. 
Finally, we present a \HwPTF \ algorithm 
for more than four oblivious SSYNC robots. 
Specifically, we present a size-adjusting algorithm, 
which yields a \HwPTF \ algorithms 
when combined with some existing pattern formation algorithms~\cite{FYOKY15,SY99,YS10}. 

\noindent{\bf Related work.} 
Many papers investigate the effect of additional capabilities or restrictions 
on the formable patterns. 
Flocchini et al. showed that odd number of oblivious ASYNC  
robots can form 
an \emph{arbitrary pattern} when they agree on the direction and orientation 
of one axis even when they have no common chirality~\cite{FPSW08}. 
Dieudonn\'{e} et al. showed that oblivious ASYNC  
robots can form an arbitrary pattern when there exists a 
designated leader robot~\cite{DPV10}. 
Cieliebak showed that more than two oblivious ASYNC robots 
can solve the \emph{point formation problem}~\cite{CFPS12} 
while two oblivious SSYNC robots cannot~\cite{SY99}. 
Interestingly, if robots do \emph{not} agree on the unit distance, SSYNC rendezvous and gathering becomes feasible~\cite{Bramas23} even if the starting configuration is bivalent (that is, with two location hosting an equal number of robots).

While the effect of common directions or orientation on 
the pattern formation problem has been investigated, 
few papers consider the effect of common distance. 
Several papers consider the pattern formation problem 
for robots with limited visibility. 
Yamauchi et al. showed that 
non-oblivious SSYNC robots with limited visibility 
can form a ``small'' target pattern under the symmetricity condition. 
The trick is first gather the robots within their visibility range and 
then make them form the target pattern~\cite{YY13}. 
Hahn et al. showed that 
oblivious FSYNC robots with limited visibility 
can form a ``large'' target pattern under the symmetricity condition 
and additional connectivity condition~\cite{HHK24}. 
However, in these papers, 
each robot does not know the size of its visibility. 

To the best of our knowledge, our result is the first result 
on the effect of a common unit distance to the pattern formation problem.

\section{Preliminary}

\subsection{Robot System}
We consider a set $R = \{r_1, r_2, \ldots, r_n\}$ of $n$ point robots. 
The robots are indistinguishable, and we use $r_i$ just for notation. 
Let $p_i(t)$ be the position of $r_i$ at time $t$ in the global coordinate system $Z_0$.  
The \emph{configuration} of $R$ at time $t$ 
is the multiset $P(t) =\{p_i(t) \mid i = 1, 2, \ldots, n\}$. 
We use $dist(p,q)$ to describe the distance between two points $p$ and $q$ in $Z_0$.

The robots repeat a \emph{Look-Compute-Move} cycle, 
that consists of the following three phases. 
In the Look phase, robot $r_i$ obtains a snapshot of the robot system, i.e.,  
the positions of all robots observed in its local coordinate system $Z_i$. 
All global and local coordinate systems are right-handed $x$-$y$ coordinate system 
with a common unit distance. 
For each robot $r_i$, the directions and orientation of $Z_i$ never change, 
but the origin of $Z_i$ moves according to the movement of $r_i$, 
that is, the origin is the current position of $r_i$. 
Hence, the local coordinate system must be denoted by $Z_i(t)$, 
but we use $Z_i$ when it is clear by the context. 
We assume that the robots are equipped with 
a \emph{weak multiplicity detection capability}, 
i.e., a robot can determine whether its current position is occupied 
by other robots or not.\footnote{We do not use multiplicity detection capability 
because our proposed algorithms does not make any multiplicity 
and we refer to multiplicity in only impossibility results.}
In the Compute phase, robot $r_i$ computes its next position 
by a common deterministic algorithm $A$. 
The sole input to $A$ is the observation obtained in the preceding Look phase, 
i.e., we consider \emph{oblivious} robots. 
The output of $A$ is the coordinates of the next position of $r_i$ in $Z_i$. 
In the Move phase, robot $r_i$ moves toward its next position, 
however, $r_i$ may stop en route after it moves by 
the \emph{minimum moving distance} $\delta$, which is not given to the robots, 
i.e., we consider \emph{non-rigid movement}.\footnote{
Some of our impossibility result involve \emph{rigid movement}, 
that allows robots always reach their next positions.} 

We consider two types of schedulers (i.e., synchronization models) for mobile robots. 
In the \emph{semi-synchronous} (SSYNC) model, 
we consider discrete time $t = 0,1,2, \ldots$ 
and non-empty subset of the robots are \emph{activated} at each time step. 
The active robots execute a Look-Compute-Move cycle synchronously 
with each of the three phases completely synchronized. 
Hence, the configuration of the robots changes from $P(t)$ to 
$P(t+1)$ according to the common algorithm $A$. 
We call the evolution $P(0), P(1), P(2), \ldots$ of the robot system $R$
an \emph{execution} of algorithm $A$. 
There exists multiple executions starting from $P(0)$ 
because of the activation schedule and non-rigid movement. 
In the \emph{asynchronous} (ASYNC) model, robots execute their Look-Compute-Move cycle independently. The length of each Look-Compute-Move cycle is finite but arbitrary, 
and a robot can be observed while moving. 
Let $t_0, t_1, t_2, \ldots, $ be the time instance where at least one robot 
obtains a snapshot in a Look phase. 
We consider $P(t_0), P(t_1), P(t_2), \ldots$ as the execution 
$P(0), P(1), P(2), \ldots$ in the ASYNC model. 

For a given set of points $P$, 
let $SEC(P)$ and $c(P)$ denote the \emph{smallest enclosing circle} (SEC) of $P$ 
and its center, respectively. 
We use $rad(P)$ to represent the radius of $SEC(P)$. 
We call the largest circle centered at $c(P)$ and contains no point of $P$ 
in its interior the \emph{largest empty circle} (LEC) of $P$, 
denoted by $LEC(P)$. 
By definition, $LEC(P)$ contains at least one point of $P$ on its boundary. 
Let $d_{min}(P)$ denote the minimum distance between the points of $P$. 

\subsection{Hardwired Pattern Formation Problem} 
The \emph{hardwired pattern formation} problem requires the robots 
to form a given target pattern in a given size 
when the robots agree on the unit distance. 
The input to the problem is a multiset $F$ of coordinates of $n$ points in $Z_0$ 
and the \emph{size} $d$ of the target pattern, which is $2 \cdot rad(F)$, i.e., 
the diameter of $SEC(F)$. 
We use \HwPTF$(n,F,d)$ to describe the hardwired pattern formation problem 
for $n$ robots, target pattern $F$, and its size $d$. 

An algorithm \emph{solves} the \HwPTF$(n,F, d)$ 
if any execution $P_0, P_1, P_2, \ldots$ has a finite $t$ 
that satisfies $P_{t+i} = Z'(F)$ for $i=1, 2, \ldots$ 
where $Z'$ is obtained by a rotation and/or translation, but not scaling 
on the global coordinate system $Z_0$. 
That is, $P_{t+i}$ is congruent to $F$. 
We assume that the initial configuration $P(0)$ does not contain 
any multiplicity. 

\subsection{Symmetricity of a Set of Points}
We briefly introduce the symmetricity of the set of points 
introduced by \cite{FYOKY15}. 
Given a set $P$ of points, we consider the decomposition of $P$ into 
regular $k$-gons centered at $c(P)$. 
Here, we say a set of $k$ points is a regular $k$-gon 
if they are placed at the vertices of some regular $k$-gon. 
A point is a regular $1$-gon with an arbitrary center and 
two points is a regular $2$-gon. 
The \emph{symmetricity} of $P$ is the maximum number of such $k$ 
with the following exception; 
$\rho(P) = 1$ when $c(P) \in P$. 
We show some examples of symmetricity: 
\begin{itemize}
    \item When $P$ forms a regular $n$-gon ($|P|=n$), $\rho(P)$ is $n$.  
    \item When $P$ forms a square $\rho(P) = 4$. 
    \item When $P$ forms a non-square rectangle, $\rho(P)=2$. 
\end{itemize}

The \emph{$\rho(P)$-decomposition} of $P$ is 
a decomposition of $P$ into subsets, say $\{P_1, P_2, \ldots, P_k\}$, 
where each $P_i$ forms a regular $\rho(P)$-gon centered at $c(P)$. 
Thus, $k = n/\rho(P)$. 
Suzuki and Yamashita showed that the robots can agree on the total ordering 
of $\{P_1, P_2, \ldots, P_{n /\rho(P)}\}$ by introducing 
appropriate local view and ordering among them~\cite{SY99,YS10}. 
We adopt a total ordering of the elements of 
the \emph{$\rho(P)$-decomposition} of $P$
so that $P_1$ is on $LEC(P)$ and $P_{n/\rho(P)}$ is on $SEC(P)$, 
i.e., the distance from $c(P)$ is considered or added as the first criteria for ordering. 

\section{Hardwired Pattern Formation for Two Robots}

In this section, we consider the \HwPTF \ problem 
for two robots. 
The target patterns of \HwPTF$(2,F,d)$ is 
either a point of multiplicity two (i.e., $d=0$) 
or two points (i.e., $d > 0$). 
Hence, we use \HwPTF$(2,d)$ instead of \HwPTF$(2,F,d)$.

\subsection{Two Oblivious SSYNC Robots} 
\label{subsec:SSYNC-2}

When $d=0$, \HwPTF$(2,d)$ degenerates into the rendezvous problem 
with a common unit distance. 
Suzuki and Yamashita showed that the rendezvous problem is not solvable by oblivious deterministic SSYNC robots without a common unit distance~\cite{SY99}. 
The impossibility result was later generalized by Courtieu et al~\cite{Courtieu15} to an even number of robots, with the ability to detect multiplicity, and the same unit distance for all robots. 
Hence, the following observation holds:
\begin{observation}
Two oblivious deterministic SSYNC robots with rigid movement cannot solve \HwPTF$(2,d)$ when $d=0$. 
\end{observation}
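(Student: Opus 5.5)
Since \HwPTF$(2,0)$ is just rendezvous under the extra assumption of a common unit distance, I will not simply cite~\cite{SY99} (which allows different units). I will reprove the Suzuki--Yamashita adversary argument and check that it never uses scaling. The natural reference for this is Courtieu et al.~\cite{Courtieu15}: their impossibility already assumes a shared unit distance and multiplicity detection and covers every even $n$, so $n=2$ is a special case. For a self-contained proof I will sketch the direct adversary argument. Assume for contradiction that an oblivious deterministic algorithm $A$ solves \HwPTF$(2,0)$ under rigid movement.

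\textbf{Step 1: symmetric local frames.} The adversary starts from two distinct points $p_1,p_2$ and fixes local coordinate systems $Z_1,Z_2$ that are images of each other under the point reflection (a rotation by $\pi$) about the midpoint. Both frames are right-handed and share the unit distance, so this choice is allowed. Because the frames are symmetric and $A$ is deterministic and oblivious, whenever both robots are activated at the same time they observe identical snapshots. They then compute symmetric destinations $q_1,q_2$, each robot's target being the other's image under the point reflection, so the configuration stays symmetric about the same midpoint.

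\textbf{Step 2: scheduling by cases.} Consider the current configuration with $p_1\neq p_2$.
\begin{itemize}
\item If $A$ tells a robot to stay, activating it does nothing. By symmetry the other robot would also stay, so the robots never meet and $A$ fails.
\item If $A$ tells a robot to move to the other robot's position, the adversary activates only that robot or only the other one. By symmetry the two destinations are swapped, and the robots exchange roles as in the classical argument. To exploit this I will use both activation options below.
\item If the destination computed by $r_1$ (and symmetrically by $r_2$) is not the other's current position, the adversary activates both robots. Their destinations are symmetric about the midpoint, so they coincide only if both equal the midpoint. If they are not the midpoint, the robots remain apart and the symmetric invariant is preserved.
\end{itemize}
In the remaining sub-case, where both destinations are the midpoint, the adversary instead activates a single robot. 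The configuration becomes $\{m,p_2\}$ with distance halved. It is then non-symmetric, but the adversary can re-choose the frames: for a fresh symmetric situation it needs frames under which the new snapshot looks symmetric. Here Suzuki--Yamashita's trick is to note that the robot's view is determined up to the frame, and the adversary may rotate $r_2$'s frame consistently with its fixed orientation. This is the delicate point, because frames are fixed for the whole execution.

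\textbf{Main obstacle.} The hardest step is handling frames that cannot be changed mid-execution, together with a common unit. In the original proof, scaling freedom lets the adversary reuse the ``same view'' at different distances; with a common unit it cannot. I will handle this with the standard potential argument of Suzuki--Yamashita: the adversary activates one robot at a time whenever a simultaneous activation would produce rendezvous. If a single activation of $r_1$ toward $p_2$ yields a gathering, then the adversary instead activates both, in which case they swap and remain distinct. Showing that one of these two choices always keeps the robots apart, using only rotation symmetry and a fixed unit, is exactly the content of~\cite{Courtieu15}, which I will invoke (it is formally verified in Coq with a common unit). This yields the observation.
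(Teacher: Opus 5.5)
Your proof ends up resting on the same thing the paper's does. The paper gives no argument for this observation beyond citing \cite{SY99} and \cite{Courtieu15}, and you also hand the decisive step to \cite{Courtieu15}, so as a citation-based proof it is acceptable. The self-contained sketch, however, would not stand on its own: one step fails, and the ``main obstacle'' you identify does not exist.

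\textbf{The failing step.} In your second bullet, the computed target is the other robot's current position, and you activate \emph{only one} robot. Under rigid movement that robot lands on the other and produces rendezvous immediately. The swap you describe happens only when \emph{both} robots are activated, as you yourself say in your last paragraph, which contradicts the bullet.

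\textbf{The non-obstacle.} You claim that after a single activation the configuration $\{m,p_2\}$ is non-symmetric, so frames would have to be re-chosen or the classical proof would need rescaling. This is not so. Each local origin is the robot's own position, and the axes of $Z_2$ are the negatives of those of $Z_1$ with the same unit. Write $R_1$ for the rotation giving $Z_1$'s axes. Then in \emph{every} configuration, $r_1$ sees the other robot at $R_1^{-1}(p_2-p_1)$ and $r_2$ sees it at $(-R_1)^{-1}(p_1-p_2)=R_1^{-1}(p_2-p_1)$. The two views always coincide, whatever the algorithm does with absolute distances. If $v$ is the common local output, the global destinations are $q_1=p_1+R_1v$ and $q_2=p_2-R_1v$, so $q_1+q_2=p_1+p_2$ at all times.

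\textbf{The crux is a short case check.} It does not need the Coq development:
\begin{itemize}
\item If $v$ sends a robot to the midpoint $m$, activate a single robot, alternating for fairness. It lands at $m$, which differs from the other robot's position.
\item Otherwise activate both. Their destinations are symmetric about $m$, so they coincide only at $m$ and are therefore distinct. A target equal to the other robot's position falls in this case and yields a swap; a stay leaves the configuration unchanged.
\end{itemize}
Neither choice ever gathers the robots, and each robot is activated infinitely often. This adversary only ever uses two frames with equal unit, which is why the classical argument transfers unchanged to the common-unit setting.
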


In the following, we will show 
\HwPTF$(2,d)$ is solvable when $d >0$ by a deterministic algorithm.

\begin{theorem}
\label{thm:HWPTF-2d}
Two oblivious deterministic SSYNC robots with non-rigid movement can solve \HwPTF$(2, d)$ when $d > 0$. 
\end{theorem}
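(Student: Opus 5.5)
The plan is an explicit algorithm that works only with the distance $D$ between the two robots and the common chirality: all local coordinate systems are right-handed with a common unit, so each robot can tell ``left'' from ``right'' of the segment towards the other robot. Because the algorithm is deterministic and chirality is shared, if both robots are activated in the same configuration they compute targets that are images of each other under the point reflection through the midpoint $m$ of the segment. The aim is a rule whose outcome is the same whether one robot or both robots move (the \emph{Z-move}), so that the SSYNC adversary gains nothing from choosing who moves.

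\emph{Z-move for $d \le D \le 3d$.} Put the origin at $m$, with the robots at $p$ and $q=-p$, so $|p| = D/2$. Robot $r_1$ aims at a point $p'$ with $|p'|=d/2$ and $dist(p',q)=|p'+p|=d$. Then one move of $r_1$ alone leaves distance $d$. If both move, $r_2$ goes to $q'=-p'$, and $dist(p',q')=2|p'|=d$ as well. Such a $p'$ is an intersection of the circle of radius $d/2$ centred at $m$ with the circle of radius $d$ centred at $q$. These circles meet exactly when $d/2 \le D/2 \le 3d/2$, that is, when $D \in [d,3d]$. Each robot takes the intersection point on its right. Under rigid movement this ends the execution after one activation: at distance $d$ the algorithm says stay, and $p'\neq q'$, so no multiplicity is created.

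\emph{Reaching the window $[d,3d]$.} If $D<d$, each robot moves directly away from the other by $2d-D$. A single mover gives distance $2d$, and two movers give $3d-D \in (2d,3d]$. If $D>3d$, each robot moves towards the other by $(D-2d)/2$. A single mover gives $(D+2d)/2$, and two movers give $2d$. Both outcomes stay above $2d>d$, and repeated single moves reach $[d,3d]$ after finitely many steps, because every activation either completes its move or shortens $D$ by at least $\delta$. Non-rigid stops in these two phases are harmless. An early stop when $D<d$ still strictly increases $D$ by at least $\delta$ or lands in the window. An early stop when $D>3d$ still decreases $D$ but never makes it smaller than $2d$. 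No multiplicity can arise, since the robots never cross.

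\emph{Main obstacle: non-rigid interruptions of the Z-move.} If $r_1$ stops between $p$ and $p'$, the new distance $D'$ need not equal $d$, and the next activation recomputes a fresh Z-target. First I would check that $D$ stays inside $[d,3d]$, or at least re-enters it via the rules above, so that the phases do not cycle forever. I would then find a potential that drops by a fixed amount, depending on $\delta$ and $d$, at every interrupted move. My first attempt is $\Phi = |D-d|$ together with the angle of the segment. By continuity, moving partway along $pp'$ should give $|D'-d|$ strictly smaller than $|D-d|$, and the $\delta$-lower bound on each move should turn this into a decrease bounded below, giving termination in finitely many activations. If $\Phi$ fails to be monotone along the straight segment, my fallback is to aim first at the point on the line $pq$ at distance $d$ from $q$ when only a little correction is needed. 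The real difficulty is to show that the geometry of the two intersecting circles makes progress uniform regardless of which robots are activated.
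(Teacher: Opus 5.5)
Your Z-move is the paper's. The termination argument under non-rigid interruptions is the substance of the proof, and you leave it open. Worse, your re-entry rule for $D<d$ makes the algorithm fail there.

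Here is an infinite execution. Take $D=2d$ and activate both robots. Stop each at $3/4$ of its Z-trajectory, which is a move of $\tfrac{3\sqrt3}{8}d$ and is allowed whenever $\delta\le\tfrac{3\sqrt3}{8}d$. The new distance is $d\sqrt{5/8}<d$. Now activate one robot. Your expanding rule moves it away by $2d-D$, so it lands exactly at distance $2d$ again, and the adversary can repeat this forever. No choice of potential can rescue this rule set; the rule itself must change.

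There are two smaller slips. Two simultaneous expanding movers give $4d-D>3d$, not $3d-D$. At $D=3d$ the two Z-trajectories are collinear and cross, so non-rigid stops can leave both robots at the midpoint and create a multiplicity. The paper uses the Z-move only for $d<D<3d$ and uses the shrinking move at $D=3d$.

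The missing ingredients are these.
\begin{enumerate}
\item Your potential $\Phi=|D-d|$ is the right one, but its decrease under a Z-move is not a continuity fact. A partial Z-move by one robot gives $D'\in(d,D)$. Partial Z-moves by both robots, however, can overshoot down to $\sqrt{(9d^2-D^2)/8}$. You therefore need $d-\sqrt{(9d^2-D^2)/8}<D-d$ on $d<D<3d$. The paper verifies this: it is trivial for $D>2d$, and for $D\le 2d$ it reduces to the quadratic $(9D-23d)(D-d)<0$.
\item No other rule may increase $\Phi$. The paper's expanding move sends each robot away by $d-D$. One mover then lands exactly at $d$, two movers land at $2d-D$, and partial moves land in between. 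Hence repeated expansions starting from $D_1<d$ stay at most $2d-D_1$, and the next Z-move input is strictly smaller than the previous one.
\item Strict decrease alone would not force convergence to $d$, so you also need the uniform decrease coming from $\delta$, which you already mention. With it, the Z-move input eventually gets so close to $d$ that the Z-move length $\tfrac12\sqrt{D^2-d^2}$ falls below $\delta$. The move then completes, and $D=d$ whichever robots are activated.
\end{enumerate}
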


We present a \HwPTF \ algorithm for two oblivious deterministic SSYNC robots. 
First, consider the simple case of an initial configuration where the inter-robot distance is $3d$, assuming collision-less and rigid movements. 
Then, the following algorithm can solve \HwPTF$(2, d)$: 
An active robot moves toward the other robot by distance $2d$. 
If only one of the robots is activated at time $0$, 
rigid movement guarantees that the active robot approaches the other robot to distance $d$ and stops. 
If both robots are activated at time $0$, rigid movement makes the two robots pass each other and stop at distance $d$. 
That is, from this initial configuration where robots are $3d$ apart, there exists an algorithm for the two robots to solve HW-PTF($d$) in one SSYNC step. 

We then consider a more general initial configuration, where the inter-robot distance is $a \neq d$. 
Specifically, we consider an initial configuration, 
where robot $r_1$ is at $p=(-a/2, 0)$ and $r_2$ is at $q=(a/2, 0)$ in $Z_0$. 
Let $p'$ and $q'$ be the next positions for $r_1$ and $r_2$ to 
finish \HwPTF$(2, d)$ in one SSYNC step with rigid movement. 
Hence, $p'$ and $q'$ must satisfy the following three equations: 
\begin{eqnarray*}
dist(p',q') &=& d \\ 
dist(p',q) &=& d \\ 
dist(p,q') &=& d. 
\end{eqnarray*}
Additionally, assume $p'$ and $q'$ have symmetric positions with respect to $(0,0)$, since robots may have the same observation due to their inconsistent local coordinate systems. 
Hence, if $p'=(x, y)$, then $q'=(-x, -y)$. 

By the above discussion, $x$ and $y$ must satisfy the following equations: 
\begin{eqnarray*}
2 \sqrt{x^2 + y^2} &=& d \\ 
(x-a/2)^2 + y^2 &=& d^2 
\end{eqnarray*}
Thus, we have 
\begin{eqnarray*}
x &=& \frac{a^2 - 3d^2}{4a} \\ 
y &=& \pm \frac{\sqrt{(a+d)(a-d)(3d-a)(3d+a)}}{4a}.
\end{eqnarray*}
Additionally, we have $a/2 - d \leq x \leq a/2 + d$ 
because $p'$ is on the circle centered at $q$. 
Thus, when $d \leq a \leq 3d$, we have points $p'$ and $q'$ that satisfy the three equations. 

For simplicity, let $y = \frac{\sqrt{(a+d)(a-d)(3d-a)(3d+a)}}{4a}$. 
Then, the four points $p$, $p'$, $q'$, and $q$ forms Z, and we call this movement the \emph{Z-move} (Figure~\ref{fig:z-move}). 
 \begin{figure}[t]
  \centering
  \includegraphics[width=8cm]{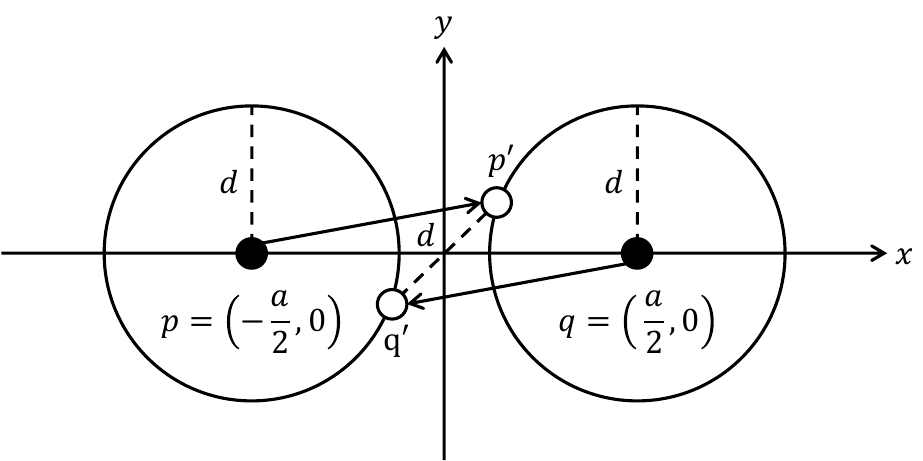} 
  \caption{Z-move} 
  \label{fig:z-move} 
\end{figure}

Algorithm~\ref{alg:HWPTF-2} shows our proposed algorithm for \HwPTF$(2, d)$. 
The algorithm consists of three moves, 
the Z-move, the expanding move, and the shrinking move. 
Let $a$ be the distance between the two robots. 
When $d < a \leq 3d$, Algorithm~\ref{alg:HWPTF-2} 
makes the robot perform the Z-move. 
When $a <d $, Algorithm~\ref{alg:HWPTF-2} 
makes the robot move distance $d$ away from the current position of the other robot 
so that the inter-robot distance is at least $d$ in a resulting configuration. 
When $a > 3d$, Algorithm~\ref{alg:HWPTF-2} 
makes the robot move to a point at distance $d/2$ from the 
midpoint of the current positions of the two robots.

\begin{algorithm}
\caption{\textsc{Form\_Hw\_Pattern\_For\_Two\_Robots$(P, d)$ at robot $r_i$}}
\label{alg:HWPTF-2}
\begin{tabbing}
xxx \= xxx \= xxx \= xxx \= xxx \= xxx \= xxx \= xxx \= xxx \= xxx
\kill 
{\bf Input by the current observation $P$} \\ 
\> $r_j$: the other robot \\
\> $p$: current position of $r_i$ \\
\> $q$: current position of $r_j$ \\
\> $a$: distance between $r_i$ and $r_j$ \\ 
\> Let $Z^*_0$ be the right-handed $x$-$y$ coordinate system s.t. $p=(-a/2,0)$ and $q = (a/2, 0)$. \\ 
\\ 
{\bf Algorithm} \\ 
\> {\bf If} $d<a<3d$ \\ 
\> \> // Z-move \\ 
\> \> $p'= \left( \frac{a^2-3d^2}{4a},-\frac{\sqrt{(a+d)(a-d)(3d-a)(3d+a)}}{4a} \right)$ \\ 
\> \> {\bf Return} $p'$ in $Z_i$ \\ 
\> {\bf else if} $a<d$ \\ 
\> \> // Expanding move \\ 
\> \> $p'=(-(d-\frac{a}{2}),0)$ \\ 
\> \> {\bf Return} $p'$ in $Z_i$ \\ 
\> \> {\bf else if} $3d\leq a$ \\ 
\> \> \> // Shrinking move \\ 
\> \> \> $p'=(-\frac{d}{2},0)$ \\ 
\> \> \> {\bf return} $p'$ in $Z_i$ \\ 
\> \> {\bf end if} \\ 
\> {\bf end if} 
\end{tabbing}
\end{algorithm}

We then show the correctness of Algorithm~\ref{alg:HWPTF-2} 
for two oblivious SSYNC robots with chirality and non-rigid movement. 
We need to consider the non-rigid movement in addition to 
the Z-move, which circumvent the SSYNC activation timing.  
For example, when the robots stop en route during the Z-move, 
their distance may become smaller than $d$, 
because the distance between the trajectories of the two robots 
(i.e., line segments $pp'$ and $qq'$) can be smaller than $d$. 
Hence, the two robots perform the expanding move 
until their distance becomes larger than $d$. 
Then, the two robots execute the Z-move again. 
We will show when we focus on time steps, 
where some robot executes the Z-move, 
the inter-robot distance gradually decreases 
and eventually the moving distance of the Z-move 
becomes smaller than the minimum moving distance $\delta$. 
Hence, an activated robot(s) can reach the next point of the Z-move 
and the two robots finish \HwPTF$(2,d)$. 

For a given execution $P_0, P_1, \ldots$ of Algorithm~\ref{alg:HWPTF-2}, 
we focus on the sequence $d_0, d_1, \ldots$ of distance between the two robots, 
where $d_t = dist(p_i(t), p_j(t))$. 

We first show that starting from an arbitrary initial configuration, 
the two robots eventually execute a Z-move. 
We have the following two lemmas. 

\begin{lemma}
\label{lemma:expand}
When $d_t < d$, 
there exists finite $t' > t$ such that $d \leq d_{t'} \leq 3d$. 
\end{lemma}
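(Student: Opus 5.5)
The plan is to track the inter-robot distance while the configuration stays in the expanding regime $d_s<d$, and to show that it strictly increases by a uniform amount per step and cannot overshoot $3d$. The main fact used throughout is that the expanding move keeps both robots on the line through their current positions. In the coordinate system $Z^*_0$ of Algorithm~\ref{alg:HWPTF-2}, an active robot at $p=(-a/2,0)$ targets $p'=(-(d-a/2),0)$. This point lies on the $x$-axis, on the side of $p$ away from $q$, at distance $d-a>0$ from $p$ and at distance $d$ from $q$. Since robots are oblivious, the same rule is applied afresh at every step. Also, the positions stay distinct: we start without multiplicity, and the moves below only increase the distance.

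\textbf{One step.} Fix a step $s\ge t$ with $a=d_s<d$. By the SSYNC model at least one robot is active, and since $a<d$ every active robot performs the expanding move. Non-rigid movement means an active robot either reaches its target or moves at least $\delta$ toward it. So each active robot moves away from the other along the common line by some amount $m$ with $\min(\delta,d-a)\le m\le d-a$.
\begin{itemize}
\item If only one robot is active, then $d_{s+1}=a+m$. Either $d_{s+1}=d$, or $d_{s+1}\ge a+\delta$.
\item If both are active, they move in opposite directions along the line. Then $d_{s+1}=a+m_1+m_2$, which is at least $a+\min(\delta,d-a)$ and at most $a+2(d-a)=2d-a<2d$.
\end{itemize}
In every case $d_{s+1}\le 2d\le 3d$. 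Moreover, either $d_{s+1}\ge d$, or $d_{s+1}\ge d_s+\delta$ with $d_{s+1}<d$. In the latter case the configuration is again in the expanding regime and the argument repeats.

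\textbf{Iteration.} While the distance stays below $d$, it grows by at least $\delta$ per step. Since it starts at $d_t>0$ and must stay below $d$, this can happen for at most $\lceil d/\delta\rceil$ consecutive steps. Hence there is a finite $t'>t$, with $t'-t\le\lceil d/\delta\rceil$, such that $d_{t'}\ge d$. Take $t'$ to be the first such step. Then $d_{t'-1}<d$, and the one-step bound gives $d_{t'}\le 2d-d_{t'-1}<2d\le 3d$. This yields $d\le d_{t'}\le 3d$.

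The only delicate point is making sure that no step leaves the expanding regime in an unexpected way. Two facts handle this: the robots remain collinear and move apart, and each individual displacement is capped at $d-a$. Together these rule out both a decrease of the distance and an overshoot past $3d$. Everything else is bookkeeping with $\delta$.
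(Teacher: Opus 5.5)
Your proof is correct and follows essentially the same route as the paper's: expanding moves are collinear and outward, so the distance increases by at least $\delta$ per step while it stays below $d$, and the per-robot displacement cap $d-d_s$ bounds the first distance that reaches $d$ by $2d-d_{t'-1}<2d$. Your split on whether $d-d_s<\delta$ is slightly more careful than the paper, which asserts $d_{t+1}-d_t\geq\delta$ unconditionally; that claim fails when a single robot's full target move is shorter than $\delta$, although then $d_{t+1}=d$ and the lemma still holds.
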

\begin{proof}
When $d_t < d$, at least one robot executes the expanding move in $P(t)$. 
The expanding move moves the two robots on the line containing $p_i(t)$ and $p_j(t)$, 
and each moving robot moves away from the other robot. 
Thus, we have $d_{t+1} - d_t \geq \delta$. 
We also have $d_{t+1} < 2d$ because $d_{t+1}$ takes the largest value 
when the two robots move to their next positions in $P(t)$. 
When $d_{t+1} < d$, at least one of the two robots perform the expanding move. 
Thus, we have $d_t < d_{t+1} < \ldots$ and 
after at most $\lceil (d - d_t)/\delta \rceil$ steps, 
the inter-robot distance becomes larger than $d$. 
\qed
\end{proof}

\begin{lemma}
\label{lemma:shrink}
When $d_t \geq 3d$, 
there exists finite $t' > t$ such that $d \leq d_{t'} \leq 3d$. 
\end{lemma}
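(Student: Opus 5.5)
We follow the structure of the proof of Lemma~\ref{lemma:expand}. The idea is to track $d_{t+1}$ in terms of $d_t$ while the shrinking move is being executed.

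\textbf{One step from $d_t \geq 3d$.} Suppose $d_t = a \geq 3d$. Then every activated robot executes the shrinking move. Its target is the point on segment $p_i(t)p_j(t)$ at distance $d/2$ from the midpoint, on its own side. Because both robots use the same midpoint and move along the common line toward each other, the order of the two robots on that line is preserved, so they never cross. Two bounds follow.
\begin{itemize}
\item \emph{Lower bound.} If both robots are activated and both reach their targets, they end exactly at distance $d$. If only one robot is activated and reaches its target, the distance becomes $a/2 + d/2 \geq 2d$. Any non-rigid stop leaves a robot farther from the midpoint than its target. Hence $d_{t+1} \geq d$ in every case, and in particular the distance never drops below $d$.
\item \emph{Decrease.} Each moving robot travels at most $a/2 - d/2$, and travels at least $\min(\delta, a/2 - d/2)$. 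Since $a/2 - d/2 \geq d$, we get $d_{t+1} \leq d_t - \min(\delta, d)$.
\end{itemize}

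\textbf{Iteration.} Consider the maximal run of steps starting at $t$ during which the distance stays at least $3d$. Along this run the distance is nonincreasing and decreases by at least the fixed positive amount $\min(\delta, d)$ per step. The run therefore ends after at most $\lceil (d_t - 3d)/\min(\delta,d) \rceil + 1$ steps. At the first step $t'$ where $d_{t'} < 3d$, the lower bound above gives $d_{t'} \geq d$. Thus $d \leq d_{t'} \leq 3d$, as required.

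\textbf{Main point to check.} The delicate part is the lower bound under non-rigid movement and partial activation. We must rule out any overshoot below $d$. This holds because each target lies on the robot's own side of the midpoint and sits no closer to the other robot than distance $d$ in every activation pattern. It then suffices to check the two extreme cases: both robots rigid, giving exactly $d$, and a single robot moving, giving at least $2d$.
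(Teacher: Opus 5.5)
Your proof is correct and follows the same route as the paper: a one-step lower bound $d_{t+1}\ge d$ from the extreme case where both robots reach their shrinking targets, a positive per-step decrease while the distance is at least $3d$, and iteration until the distance drops below $3d$. Your per-step decrease of $\min(\delta,d)$ is slightly more careful than the paper's claim $d_t-d_{t+1}\ge\delta$, which overlooks the case where the travel distance $(d_t-d)/2$ is shorter than $\delta$.
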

\begin{proof}
When $d_t \geq 3d$, at least one robot executes the shrinking move in $P(t)$. 
The shrinking move moves the two robots on the line segment $p_i(t) p_j(t)$, 
and each moving robot moves toward the other robot. 
Thus, we have $d_t - d_{t+1} \geq \delta$.
We also have $d_{t+1} \geq d$ because $d_{t+1}$ takes the smallest value 
when the two robots move to their next positions in $P(t)$. 
When $d_{t+1} \geq 3d$, at least one of the two robots perform the shrinking move. 
Thus, we have $d_t > d_{t+1} > \ldots$ and 
after at most $\lceil (d_t - 3d)/\delta \rceil$ steps, 
the inter-robot distance becomes smaller than $3d$. 
\qed
\end{proof}

We then show that the two robots eventually solve \HwPTF$(2,d)$. 
\begin{lemma}
\label{lemma:zmove}
When $d < d_t < 3d$, 
there exists finite $t' > t$ such that $d_{t'} = d$. 
\end{lemma}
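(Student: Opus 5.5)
We analyse what happens at one step from $d<d_t<3d$ and show that the quantity $d_t$, sampled at the Z-move steps, strictly decreases by a definite amount each time, so eventually both Z-moves are short enough that any activated robot reaches its target.

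\textbf{Step 1: one-step analysis.} Place $p_i(t)=(-a/2,0)$ and $p_j(t)=(a/2,0)$ with $a=d_t$. Since the robots share chirality, both compute Z-move targets $p'=(x,-y)$ and $q'=(-x,y)$, symmetric about the origin, with $x=(a^2-3d^2)/(4a)$ and $y>0$ as in the text. There are three activation cases.
\begin{itemize}
\item If exactly one robot is activated and reaches its target, then $dist(p',q)=d$ and we are done.
\item If both are activated and both reach their targets, then $dist(p',q')=d$ and we are done.
\item Otherwise, the robot(s) stop en route at points $p''\in pp'$ and $q''\in qq'$, with each mover travelling at least $\delta$.
\end{itemize}

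\textbf{Step 2: key geometric claim.} We show that in the stop-en-route case the new distance satisfies $d_{t+1}<a$, and indeed $d_{t+1}\le a-\epsilon(\delta)$ for some $\epsilon>0$. Let $L=dist(p,p')$. Expanding the distance of the endpoints, we have $dist(p+\lambda(p'-p),\,q+\mu(q'-q))^2$ as a function of $\lambda,\mu\in[0,1]$. This function equals $a^2$ at $(0,0)$ and $d^2$ at $(1,0)$, $(0,1)$ and $(1,1)$. Along each coordinate direction it is a convex quadratic, and it is jointly a quadratic form. A single mover moves along a circle-chord toward a point at distance $d<a$ from the fixed robot, so its distance decreases, strictly by a margin depending on $\delta$.

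The joint case needs a convexity argument. Write $f(\lambda,\mu)=|(q-p)+\mu(q'-q)-\lambda(p'-p)|^2$. This is a convex function in $(\lambda,\mu)$, so on the unit square it is at most the maximum over the vertices along segments. Concretely, $f\le \max$ of the bilinear interpolation plus a nonpositive correction, which gives $f(\lambda,\mu)\le a^2-(a^2-d^2)\max(\lambda,\mu)$ roughly. I would verify the exact inequality $f(\lambda,\mu)\le a^2-c\,(\lambda+\mu)$ by direct computation with the explicit coordinates. This bound may fail only near $a\to d$, where it is fine since the target distances all approach $a$.

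\textbf{Step 3: handling drops below $d$.} If $d_{t+1}<d$, Lemma~\ref{lemma:expand} brings the distance back into $[d,3d]$. We must check that the distance returns to a value no larger than the previous Z-move distance $a$. The expanding move stops once the distance exceeds $d$, and each expanding step enlarges it to at most $2d$. I would refine this: after expansion starting from below $d$, an overshoot lands in $[d,\,d+\text{something}]$. Here is the main obstacle. Expansion could overshoot up to nearly $2d>a$ when $a$ is close to $d$, breaking monotonicity. My first attempt would be to bound the overshoot by $2(d-d_{t+1})$ plus structure, and show $d_{t+1}$ after a Z-move is at least $a-O(\text{move length})$, itself close to $d$ when $a$ is close to $d$. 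Failing that, I would use a potential such as $|d_t-d|$ together with the observation that, as $a\downarrow d$, the Z-move length $dist(p,p')\to 0$.

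\textbf{Step 4: termination.} Once the Z-move length $dist(p,p')$ is at most $\delta$, every activated robot reaches its target, and Step 1 gives $d_{t'}=d$. This length is a continuous function of $a$ vanishing at $a=d$, so it is at most $\delta$ on some interval $(d,d+\eta]$. Step 2 gives a uniform decrease of at least $\epsilon$ per Z-move step, and Step 3 ensures non-increase across expansions. Hence after finitely many Z-move steps $a\le d+\eta$, and the next activation finishes. If $d_{t'}=d$ exactly, no rule applies, since the algorithm has no case $a=d$, so the configuration stays fixed.
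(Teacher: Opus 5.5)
Your Step~3 is a genuine gap, and it is the step that carries the lemma. After an interrupted Z-move drops the distance below $d$, the expanding moves push it back up. You must show that the next Z-move input is then smaller than the previous one, $a$. Neither remedy you propose does this.

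A bound $d_{t+1}\ge a-O(L)$, with $L=dist(p,p')=\sqrt{(a^2-d^2)/2}$, is far too weak. Even the sharp triangle-inequality version $d_{t+1}\ge a-2L$ only bounds the next input by $2d-a+2L$. Getting $2d-a+2L<a$ would need $L<a-d$, which is equivalent to $a>3d$. Near $a=d$ it fails badly, since $L\approx\sqrt{d(a-d)}\gg a-d$.

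The potential $|d_t-d|$ hits the same wall. Expansions do not increase it, but you still need that an interrupted Z-move cannot undershoot $d$ by as much as $a-d$. Your convexity/bilinear-interpolation argument in Step~2 yields only \emph{upper} bounds on $d_{t+1}$, so it cannot supply this. Hence the claim in Step~4 that ``Step~3 ensures non-increase across expansions'' is unsupported. It is also misdirected: expansions strictly increase $d_t$, and what must be controlled is the next Z-move input.

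The missing idea is an exact lower bound on the interrupted distance. It follows from a fact you already wrote down: $q'=-p'$ and $q=-p$, so the two Z-move displacements are antiparallel, $q'-q=-(p'-p)$. With progress fractions $\lambda,\mu\in[0,1]$, the new relative vector is $(q-p)-(\lambda+\mu)(p'-p)$. Hence $d_{t+1}^2=a^2-\frac{a^2-d^2}{2}\,s(3-s)$ with $s=\lambda+\mu\in(0,2]$. This equals $d^2$ at $s=1,2$ and has minimum $\frac{9d^2-a^2}{8}$ at $s=3/2$; it also gives your Step~2 at once.

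Next, starting from $D_0<d$, each expanding step sends $D$ into $(D,2d-D]$, and $D$ only grows. So the first value $\ge d$ is at most $2d-D_0\le 2d-\sqrt{(9d^2-a^2)/8}$. This is $<a$ on $(d,3d)$: trivially for $a>2d$, and for $a\le 2d$ it is equivalent to $(9a-23d)(a-d)<0$. This is precisely the paper's argument.

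Your insistence on a \emph{uniform} decrement in Step~4 is then the right way to conclude. The paper only records the strict decrease $d_{t'_{k+1}}<d_{t'_k}$, which by itself does not force the Z-move length below $\delta$. A non-finishing Z-move needs $L>\delta$, i.e.\ $a>\sqrt{d^2+2\delta^2}$. On that range both decrements are bounded away from zero:
\begin{itemize}
\item in the no-undershoot case, $a^2-d_{t+1}^2>\delta^2$;
\item in the undershoot case, the margin $a-2d+\sqrt{(9d^2-a^2)/8}$ is continuous and positive.
\end{itemize}
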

\begin{proof}
Let $p_i(t) = p$, $p_j(t)=q$ in $P(t)$. 
If $r_i$ and/or $r_j$ are activated in $P(t)$, they perform the Z-move. 
Let $p'$ and $q'$ be the next positions of $r_i$'s Z-move 
and $r_j$'s next move, respectively. 
The moving distance of $r_i$ is 
\begin{equation}\label{equ:zdist}
dist(p,p') = \frac{1}{2}\sqrt{d_t^2-d^2}, 
\end{equation}
which becomes smaller as $d_t$ becomes smaller. 

Due to non-rigid movement, robots may stop en route when 
they perform a Z-move. 
In this case, the inter-robot distance may become smaller than $d$ 
and (at least) one robot performs the expanding move in $P(t+1)$. 
That is, an execution $P(t), P(t+1), P(t+2), \ldots$ 
may contain the expanding move. 
Let $t'_1, t'_2, \ldots$ be the time steps, 
such that at least one robot executes the Z-move in $P(t'_k)$ for ($k=1, 2, \ldots)$. 
We will show that $d_{t'_{k+1}} < d_{t'_k}$ for any $k = 1, 2, \ldots$, that is, 
the input distance to the Z-move monotonically decreases. 

We first bound the inter-robot distance in $t'_k +1$, 
i.e., after some Z-move in $t'_k$. 
We abuse the notation in the first paragraph: 
$p_i(t'_k)=p$, $p_j(t'_k)=q$, and 
$p'$ and $q'$ be the next positions of $r_i$'s Z-move 
and $r_j$'s next move, respectively. 
If $r_i$ ($r_j$, respectively) is not activated in $t'_k$, 
$p = p'$ ($q=q'$, respectively). 

In the following, 
we use the coordinate system of Algorithm~\ref{alg:HWPTF-2} 
and use $Z^*_0$ to describe it. 
The trajectory of the Z-move for $r_i$ is the line segment $pp'$, 
represented by the following equation of a line: 
\begin{eqnarray*}
&& y = \frac{\sqrt{(d_{t'_k}+d)(d_{t'_k}-d)(3d-d_{t'_k})(3d+d_{t'_k})}}{3(d^2_{t'_k}-d^2)} \left( x + \frac{d_{t'_k}}{2}\right) \\ 
&& -\frac{d_{t'_k}}{2} \leq x \leq \frac{d^2_{t'_k}-3d^2}{4d_{t'_k}}.
\end{eqnarray*}
The trajectory of the Z-move for $r_j$ is the line segment $qq'$, 
represented by the following equation of a line: 
\begin{eqnarray*}
&& y = \frac{\sqrt{(d_{t'_k}+d)(d_{t'_k}-d)(3d-d_{t'_k})(3d+d_{t'_k})}}{3(d^2_{t'_k}-d^2)} \left( x - \frac{d_{t'_k}}{2}\right) \\ 
&& - \frac{d^2_{t'_k}-3d^2}{4d_{t'_k}} \leq x \leq -\frac{d_{t'_k}}{2}.
\end{eqnarray*}
When $r_i$'s $x$ coordinate changes by $\varepsilon_i$ by a Z-move in $P(t'_k)$, 
$r_i$'s coordinates in $P(t'_k +1)$ is 
\begin{equation*}
\left( -\frac{d_{t'_k}}{2} + \varepsilon_i, \frac{\sqrt{(d_{t'_k}+d)(d_{t'_k}-d)(3d-d_{t'_k})(3d+d_{t'_k})}}{3(d_{t'_k}^2-d^2)} \varepsilon_i \right). 
\end{equation*}
When $r_j$'s $x$ coordinate changes by $\varepsilon_j$ by a Z-move in $P(t'_k)$, 
$r_j$'s coordinates in $P(t'_k +1)$ is 
\begin{equation*}
\left( \frac{d_{t'_k}}{2} - \varepsilon_j, \frac{\sqrt{(d_{t'_k}+d)(d_{t'_k}-d)(3d-d_{t'_k})(3d+d_{t'_k})}}{3(d_{t'_k}^2-d^2)} \varepsilon_j \right). 
\end{equation*}
Thus, we obtain 
\begin{equation*}
d_{t'_k +1} = \sqrt{\left( \varepsilon_i + \varepsilon_j - d_{t'_k} \right)^2 + \frac{(3d-d_{t'_k})(3d+d_{t'_k})}{9(d_{t'_k}^2 - d^2)} \left(\varepsilon_i + \varepsilon_j\right)^2}
\end{equation*}
Let $\varepsilon = \varepsilon_i + \varepsilon_j$. 
We also have the following inequality. 
\begin{equation*}
0 < \varepsilon \leq \frac{3d_{t'_k}^2-3d^2}{2d_{t'_k}}. 
\end{equation*}
Putting all together, we obtain 
\begin{equation*}
\sqrt{\frac{8 d^2_{t'_k}}{9(d_{t'_k}^2-d^2)} \left( \varepsilon - \frac{9(d_{t'_k}^2-d^2)}{8 d_{t'_k}}\right)^2 + \frac{9d^2 - d_{t'_k}^2}{8}}
\end{equation*}
Hence, we obtain 
\begin{equation*}
\sqrt{\frac{9d^2-d_{t'_k}^2}{8}} \leq d_{t'_k +1} < d_{t'_k}. 
\end{equation*}

We then show $d_{t'_{k+1}} < d_{t'_k}$, 
that is, the input distance to the $(k+1)$-st Z-move 
is smaller than that of the $k$-th Z-move. 

\noindent{Case 1.} When $d_{t'_{k+1}} > d$, 
at least one of the two robots performs the Z-move in $P(t'_k+1)$, i.e., $t'_{k+1} = t'_k +1$).
Hence, we have $d_{t'_{k+1}} < d_{t'_k}$. 

\noindent{Case 2.} When $d_{t'_{k+1}} < d$, 
at least one of the two robots performs the expanding move in $P(t'_k +1)$. 
By Lemma~\ref{lemma:expand}, there exists $\ell$ such that
\begin{equation*}
d \leq d_{t'_k+1 +\ell} < 3d 
\end{equation*}
and $t'_{k+1} = t'_k+1 +\ell$. 
By Algorithm~\ref{alg:HWPTF-2}, $P(t'_k +2)$ obtained by 
the expanding move in $P(t'_k +1)$ satisfies  
\begin{equation*}
d_{t'_k +1} < d_{t'k +2} \leq 2d - d_{t'k +1}. 
\end{equation*}
By repeating this equation, 
\begin{eqnarray*}
d_{t'_k+1} &<& d_{t'_k+2} \leq 2d - d_{t'_k+1} \\ 
d_{t'_k+2} &<& d_{t'_k+3} \leq 2d - d_{t'_k+2} \\ 
& \cdots & \\ 
d_{t'_k+1+ \ell} &<& d_{t'_k+\ell} \leq 2d - d_{t'_k+1 + \ell} 
\end{eqnarray*}
Hence, we have $d_{t'_k+1+ \ell} = d_{t'_{k+1}} < 2d - d_{t'_{k+1}}$. 
The maximum value of $2d - d_{t'_{k+1}}$ is achieved when $d_{t'_k +1}$ 
takes the minimum value  
\begin{equation*}
d_{t'_{k+1}} < 2d - d_{t'_{k+1}} \leq 2d - \sqrt{\frac{9d^2 - d_{t'_k}^2}{8}}. 
\end{equation*}
We will show 
\begin{equation}  
2d - \sqrt{\frac{9d^2 - d_{t'_k}^2}{8}} < d_{t'_k}.
\end{equation}
By changing the inequality, we obtain 
\begin{equation} \label{equ:diff}
2d - d_{t'_k} < \sqrt{\frac{9d^2 - d_{t'_k}^2}{8}}
\end{equation}
When $2d < d_{t'_k} < 3d$, the right-hand side of equation \eqref{equ:diff} 
is positive and the left-hand side is negative. 
Hence, equation \eqref{equ:diff} holds. 
When $d < d_{t'_k} \leq 2d$, by squaring both sides we obtain 
\begin{equation*}
9 \left( d_{t'_k} - \frac{16}{9} d \right) ^2 - \frac{49}{9} d^2  < 0 
\end{equation*}
This equation holds for $d < d_{t'_k} \leq 2d$. 
Hence, for $d < d_{t'_k} < 3d$, equation \eqref{equ:diff} holds. 
Consequently, we have $d_{t'_{k+1}} < d_{t'_k}$. 
That is, the input to the Z-move monotonically decreases 
even when the expanding moves are performed between 
two consecutive Z-moves. 

By the above discussion, we have 
$d_{t'_1} > d_{t'_2} > \ldots$. 
By equation~\ref{equ:zdist}, 
the moving distance of the Z-move monotonically decreases, 
and there exists finite $m$ such that in $P{t'_m}$ 
the moving distance of the Z-move is smaller than 
the minimum moving distance $\delta$. 
Then, the moving robots reaches its destination 
and $d_{t'_m + 1} = d$ holds. 
Of course, there might be other possibilities that 
the robots finish \HwPTF$(2,d)$ by the expanding move or 
the shrinking move. 
\qed
\end{proof}

By Lemma~\ref{lemma:expand}, \ref{lemma:shrink}, and \ref{lemma:zmove}, 
we have Theorem~\ref{thm:HWPTF-2d}.

\subsection{Two Oblivious ASYNC Robots}
\label{subsec:ASYNC-2}

In this section, we show that two oblivious ASYNC robots cannot solve \HwPTF$(2,d)$ 
by the fact that the two ASYNC robots must perform the Z-move in the final step. 
By definition, any SSYNC execution of some algorithm $A$ is also an ASYNC execution of $A$. 
By the calculation of the Z-move for any SSYNC (thus ASYNC) activation, there exists no other next position except that of the Z-move that solves \HwPTF$(2,d)$. 
On the other hand, Z-move does not consider all ASYNC schedules, that allows some robot to observe the position of moving robots. 
Each robot can observe the position of other robots, 
but cannot recognize which robot is moving. 
Hence, a robot may compute its next position for the Z-move without knowing whether the other robot is moving or not. 
We will first show that two robots cannot stop at distance $d$ under such ASYNC schedule even when the robots always perform rigid movement. 

\begin{lemma}
\label{lemma:Zmove-in-ASYNC}
Two oblivious deterministic ASYNC robots with rigid movement 
cannot finish \HwPTF$(2, d)$ by the Z-move for any $d$. 
\end{lemma}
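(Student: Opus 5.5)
The plan is to build an adversarial ASYNC schedule that exploits the one thing the Z-move cannot control: a robot may observe the other robot while it is in the middle of its Z-move. We assume an algorithm $A$ that finishes by a Z-move. We then show that whenever the configuration is about to reach distance $d$ via a Z-move, the scheduler can interleave activations so that the final configuration is not at distance $d$. Since the robots are oblivious and cannot tell whether the other is moving, the computed destination depends only on the snapshot.

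Step 1 is to pin down the behaviour of $A$ near termination. By the SSYNC analysis of this section, any SSYNC execution is an ASYNC execution. Suppose a robot at distance $a\neq d$ computes a move that ends the execution at distance $d$. To be correct under both the ``only one activated'' and ``both activated'' SSYNC schedules, the destinations must be $p'$ and $q'$ with $dist(p',q)=dist(p,q')=dist(p',q')=d$, and the symmetry requirement applies because local views may coincide. So the final move from distance $a$ is forced to be the Z-move given by the formulas above, with $d<a\le 3d$. We also note that the destination, as a function of the observed distance $a$, is the explicit map of Algorithm~\ref{alg:HWPTF-2}, up to chirality and reflection.

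Step 2 is the adversarial schedule. Let the robots be at $p,q$ with $dist(p,q)=a$, where both would perform the Z-move. First, $r_i$ does a Look at time $t_0$ and starts moving toward $p'$. While $r_i$ is at an intermediate point $m$ on segment $pp'$ strictly between $p$ and $p'$, $r_j$ does a Look. It sees distance $a''=dist(m,q)\neq a$ and computes its own Z-move target $q''$ relative to $(m,q)$. Rigid movement then brings $r_i$ to $p'$ and $r_j$ to $q''$. The claim to verify is that $dist(p',q'')\neq d$. Here $q''$ lies on the circle of radius $d$ around $m$, and generically this circle does not pass through the point at distance $d$ from $p'$ that the Z-move geometry requires. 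Concretely, I would show that $dist(p',q'')=d$ together with $dist(m,q'')=d$ would force $q''$ onto the perpendicular bisector of $mp'$, and that this contradicts the explicit formula for $q''$ for all but finitely many choices of $m$ on the segment. The scheduler can therefore always choose a good $m$.

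If instead $dist(m,q)\notin(d,3d]$, then $r_j$ performs an expanding or shrinking move, which cannot end at distance $d$ with $r_i$ at $p'$ except on a measure-zero set of $m$. This again lets us pick $m$ appropriately.

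Step 3 handles the case where the execution never reaches a configuration congruent to $F$ through a Z-move. Because every potential final Z-move can be spoiled in this way, and the resulting configuration is again not at distance $d$, repeating the construction yields an infinite ASYNC execution that never stabilizes at distance $d$ via a Z-move.

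I expect the main obstacle to be the computation in Step 2: showing that $dist(p',q'')\neq d$ for some intermediate point $m$. This requires handling the explicit square-root formulas, and possibly both chiralities and signs of $y$. I would first parametrize $m=p+\lambda(p'-p)$ and show that $dist(p',q''(\lambda))^2-d^2$ is a nonconstant analytic function of $\lambda$ on $(0,1)$. That gives finitely many bad $\lambda$ near $0$ and avoids solving the equations exactly.
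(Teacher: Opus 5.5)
Your schedule is the paper's: $r_i$ looks at $(p,q)$ and starts its Z-move toward $p'$; $r_j$ looks while $r_i$ is at an interior point $m$ of $pp'$ (the paper's $p''$) and computes its own Z-move target $q''$. Everything then hinges on $dist(p',q'')\neq d$. That inequality is the whole content of the lemma, and your proposal does not establish it.

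The analyticity device only upgrades ``$F\not\equiv 0$'' to ``all but isolated $\lambda$ work''. Nonconstancy of $F(\lambda)=dist(p',q''(\lambda))^2-d^2$ still needs an actual evaluation, and it is not automatic. We have $F(0)=0$ (this is the SSYNC Z-move). Also $F(\lambda)\to 0$ as $\lambda\to 1$: then $dist(q,m)\to d$, $r_j$'s Z-move shrinks to nothing, and $q''\to q$, which is at distance $d$ from $p'$.

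The paper fills this step by computing $\cos\angle p'qq''$ twice:
\begin{itemize}
\item by angle addition of $\angle p'qm$ (law of cosines in $\triangle p'qm$) and the Z-move angle for the observed distance $dist(q,m)$;
\item from the isosceles triangle $p'qq''$ that $dist(p',q'')=dist(p',q)=d$ would force.
\end{itemize}
It then asserts that the resulting equation fails at $t=(s-1)/2$.

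You can close the gap along the same lines. Normalize $d=1$ and put $L=dist(p,p')=\sqrt{(a^2-1)/2}$ and $\delta=dist(m,p')\in(0,L)$. Then $dist(q,m)^2=1+\delta^2+\delta L\in(1,a^2)$, so the expanding/shrinking branch you provide for never arises. Also $dist(q,q'')^2=\delta(\delta+L)/2$, and $dist(p',q'')=1$ is equivalent to
\[
\sqrt{\delta+L}\,\bigl(4+\delta L-2\delta^2\bigr)=\sqrt{\delta\,(4-L^2)\,(8-\delta^2-\delta L)}.
\]
As $\delta\to 0$ the left side tends to $4\sqrt{L}>0$ while the right side tends to $0$. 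Hence every $m$ sufficiently close to $p'$ spoils the Z-move. Squaring yields a polynomial in $\delta$ of degree $5$ with leading coefficient $4$, so only finitely many positions of $m$ fail to spoil it. This is the genericity statement you were after; for example, $a=2$ with $m$ the midpoint of $pp'$ gives $dist(p',q'')^2\approx 0.79$.

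Two smaller points. Your Step~1 is not needed for the lemma itself; it is the ingredient of Case~2 of Theorem~\ref{thm:impossibleObliviousASYNC}. Your Step~3 makes the same leap to an infinite execution as the paper's closing sentence.
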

\begin{proof}
The impossibility for $d=0$ is clear from Theorem~\ref{thm:HWPTF-2d} 
because the set of all executions of the ASYNC model 
contains that of the SSYNC model, i.e., 
an impossibility result in the SSYNC model directly applies to the ASYNC model. 

We consider the following scenario shown in Figure~\ref{fig:ASYNCtriangles}. 
\begin{enumerate}
\item In an initial configuration, $r_i$ is located at $p = (-a/2, 0)$ 
and $r_j$ is located at $q = (a/2, 0)$ ($d < a <3d$). 
\item First, $r_i$ is activated and computes its destination $p'$
of the Z-move, i.e., 
$$ p' = \left( \frac{a^2-3d^2}{4a}, \frac{\sqrt{(a^2-d^2)(9d^2-a^2)}}{4a} \right).$$
\item After $r_i$ has started its Z-move to $p'$, 
$r_j$ observes $r_i$ at position $p''$ such that $dist(p', p'') = \delta_a$. 
Then, $r_j$ computes its destination $q''$ of the Z-move.
\end{enumerate}
 \begin{figure}[t]
  \centering
  \includegraphics[width=4cm]{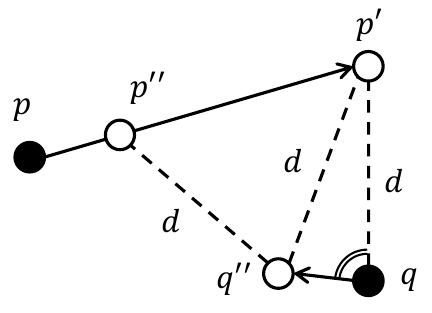} 
  \caption{Z-moves under the ASYNC schedule}
  \label{fig:ASYNCtriangles} 
\end{figure}

Assume that $r_i$ and $r_j$ succeeds in \HwPTF$(2, d)$ in the above scenario, 
that is, $dist(p', q'') = d$. 
We first calculate the cosine of $\theta = \angle{p' q q''}$ under the above condition. 
Because $p''$ lies on line $pp'$, the coordinate of $p''$ is
$$\left( \frac{a^2-3d^2}{4a}-\delta_a\sqrt{\frac{9(a^2-d^2)}{8a^2}}, \frac{\sqrt{(a^2-d^2)(9d^2-a^2)}}{4a}-\delta_a\sqrt{\frac{9d^2-a^2}{8a^2}} \right).$$
We can calculate $dist(q, p'')$ from the coordinates of $q$ and $p''$.
$$dist(q p'') = \sqrt{d^2+\delta_a^2+\delta_a\sqrt{\frac{a^2-d^2}{2}}}$$
For $\triangle{p'qp''}$, using the law of cosines, 
$$\cos{\angle{p' q p''}}=\frac{2d^2+\delta_a\sqrt{\frac{a^2-d^2}{2}}}{2d \left(\sqrt{d^2+\delta_a^2+\delta_a\sqrt{\frac{a^2-d^2}{2}}}\right)}.$$
We can calculate $\sin{\angle{p' q p''}}$ from $\cos{\angle{p' q p''}}$.
$$\sin{\angle{p' q p''}}=\frac{\delta_a\sqrt{\frac{9d^2-a^2}{2}}}{2d \left(\sqrt{d^2+\delta_a^2+\delta_a\sqrt{\frac{a^2-d^2}{2}}} \right)}$$

In an initial configuration where the inter-robot distance is $a$, 
the cosine of the angle formed by the Z-move and the $x$-axis of $Z_0^*$ is 
$\frac{3}{2a}\sqrt{\frac{a^2-d^2}{2}}$, 
and its sine is $\frac{1}{2a}\sqrt{\frac{9d^2-a^2}{2}}$.
By replacing $a$ with $dist(q p'')$, we obtain 
$$\cos{\angle{q''qp''}}=\frac{3}{2\sqrt{2}}\sqrt{\frac{\delta_a^2+\delta_a\sqrt{\frac{a^2-d^2}{2}}}{d^2+\delta_a^2+\delta_a\sqrt{\frac{a^2-d^2}{2}}}}$$
$$\sin{\angle{q''qp''}}=\frac{1}{2\sqrt{2}}\sqrt{\frac{8d^2-\delta_a^2-\delta_a\sqrt{\frac{a^2-d^2}{2}}}{d^2+\delta_a^2+\delta_a\sqrt{\frac{a^2-d^2}{2}}}}.$$
Using the addition formulas of cosines,
\begin{equation}
\label{eq:cos1}
\cos{\angle{p' q q''}}=\frac{3 \left( 2d^2+\delta_a\sqrt{\frac{a^2-d^2}{2}} \right)\sqrt{\delta_a^2+\delta_a\sqrt{\frac{a^2-d^2}{2}}}-\delta_a\sqrt{\frac{9d^2-a^2}{2}}\sqrt{8d^2-\delta_a^2-\delta_a\sqrt{\frac{a^2-d^2}{2}}}}{4\sqrt{2}\left( d^2+\delta_a^2+\delta_a\sqrt{\frac{a^2-d^2}{2}}\right)}.
\end{equation}

We than calculate the cosine of $\angle{p' q q''}$ under the condition 
that $dist(p' q'') = d$. 
We also have $dist(p' q) = d$ by the Z-move and 
$$dist(qq'') = \sqrt{\frac{dist(qp'')^2-d^2}{2}}=\sqrt{\frac{\delta_a^2+\delta_a\sqrt{\frac{a^2-d^2}{2}}}{2}}.$$
$\triangle{p'qq''}$ is an isosceles triangle.
So, we can calculate the cosine of $\angle{p' q q''}$ directly.
\begin{equation}
\label{eq:cos2}
\cos{\angle{p' q q''}}=\frac{\sqrt{\delta_a^2+\delta_a\sqrt{\frac{a^2-d^2}{2}}}}{2\sqrt{2}d}. 
\end{equation}

For simplicity, let $a = sd$ ($1<s<3$) and 
$\delta_a = t \sqrt{(a^2-d^2)/2} = td \sqrt{(s^2-1)/2}$ ($0<t<1$).
We obtain the following equation by equations \eqref{eq:cos1} and \eqref{eq:cos2}. 
\begin{eqnarray*}
&&\frac{3(4+t(s^2-1))\sqrt{(t^2+t)(s^2-1)}-t\sqrt{(s^2-1)(9-s^2)(16-(t^2-t)(s^2-1))}}{16+8(t^2+t)(s^2-1)} \\
&=&\frac{\sqrt{(t^2+t)(s^2-1)}}{4}
\end{eqnarray*}
When $t = (s-1)/2$ ($1 < s< 3$), the equation does not hold. 
Consequently, the ASYNC scheduler can forever prevent the two robots 
from solving \HwPTF$(2, d)$ for and $d>0$. 
\qed
\end{proof}

\begin{theorem}
\label{thm:impossibleObliviousASYNC}
Two oblivious deterministic ASYNC robots cannot solve \HwPTF$(2, d)$ for any $d > \delta$. 
\end{theorem}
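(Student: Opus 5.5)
We argue by contradiction. Fix an algorithm $A$ that supposedly solves \HwPTF$(2,d)$ for two oblivious deterministic ASYNC robots with $d>\delta$. Since every SSYNC execution of $A$ is also an ASYNC execution, $A$ must in particular succeed under every SSYNC schedule and every non-rigid movement, including rigid movement as a special case. The plan has three steps. First, show that every successful execution must end with a step that is exactly a Z-move. Second, invoke Lemma~\ref{lemma:Zmove-in-ASYNC} to show that an ASYNC adversary can always spoil that final Z-move. Third, iterate the spoiling forever.

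\textbf{Step 1: the last step is a Z-move.} Consider the last step before the configuration becomes congruent to $F$, i.e., before the distance becomes exactly $d$ and stays there. Let $a\neq d$ be the inter-robot distance at that point, with robots at $p=(-a/2,0)$ and $q=(a/2,0)$ in $Z^*_0$. Obliviousness and the symmetric views force the two robots to compute point-symmetric destinations $p'$ and $q'=-p'$. The SSYNC adversary may activate $r_i$ alone, $r_j$ alone, or both, and movement may be rigid. Success under all three activations requires $dist(p',q)=dist(p,q')=dist(p',q')=d$. The computation preceding Algorithm~\ref{alg:HWPTF-2} shows that these equations have a solution only when $d\le a\le 3d$, and that the only solutions are the Z-move destinations (with either sign of $y$).

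Next we must exclude the degenerate possibility that the adversary cannot make the robots stop early. Here $d>\delta$ enters. The moving distance of the Z-move is $\frac12\sqrt{a^2-d^2}$. By exploiting non-rigid movement we want to ensure that the adversary can stop a robot en route. If the move length exceeds $\delta$, the adversary does so and the configuration is not yet final. So the only way to finish is a move of length at most $\delta$. We would also use $d>\delta$ to guarantee that the scheduler has enough room to place the observed position $p''$ strictly inside segment $pp'$ at the chosen offset $\delta_a$ of Lemma~\ref{lemma:Zmove-in-ASYNC}. Concretely, we choose $t=(s-1)/2$, which gives $\delta_a=td\sqrt{(s^2-1)/2}$. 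We need to check that this is a realizable mid-move position, i.e., that $\delta_a$ is less than the move length. I expect this bookkeeping, namely matching the parameter range $1<s<3$ of the lemma to the requirement $d>\delta$, to be the main technical obstacle.

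\textbf{Step 2: spoil the Z-move.} From Step 1, any configuration from which $A$ can finish has distance $a\in(d,3d)$, and $A$ prescribes the Z-move there. The adversary then applies the scenario of Lemma~\ref{lemma:Zmove-in-ASYNC}. It activates $r_i$, and lets $r_j$ Look while $r_i$ is at $p''$ on its trajectory. The lemma shows that $dist(p',q'')\neq d$, so the resulting configuration is not congruent to $F$.

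\textbf{Step 3: iterate.} The new configuration again has some distance other than $d$. By obliviousness, $A$ restarts from it as from a fresh initial configuration. If $A$ ever brings the robots back to a configuration from which it would finish, that finishing step is again a Z-move by Step 1. The adversary repeats the spoiling schedule each time, while keeping the schedule fair by activating both robots infinitely often. This yields an infinite ASYNC execution that never reaches and stays at distance $d$, contradicting the assumption that $A$ solves \HwPTF$(2,d)$.
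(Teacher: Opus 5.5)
Your proof is sound in outline but organized differently from the paper's, and it misses the boundary case $a=3d$.

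\textbf{Comparison of routes.} The paper splits on $d_t$ into three cases.
\begin{itemize}
\item For $d_t\ge 3d$ it interrupts a robot that must travel at least $d>\delta$.
\item For $d<d_t<3d$ it invokes Lemma~\ref{lemma:Zmove-in-ASYNC}.
\item For $d_t<d$ it asserts that a single activation and a double activation cannot both succeed.
\end{itemize}
You instead let the adversary fix $Z_j$ as $Z_i$ rotated by $\pi$. You should say this explicitly; since frames never change, the two views then coincide in every configuration of your iteration. You then use one computation: a point-symmetric destination pair that succeeds under all three SSYNC activations with rigid moves needs $|p'|=d/2$ and $dist(p',q)=d$. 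That point lies on two circles which meet only if $d\le a\le 3d$.

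This treats $a<d$ and $a>3d$ uniformly and uses only rigid movement there. It also supplies the justification that the paper's Case~3 leaves implicit. Your fairness claim holds as well: by point symmetry, activating $r_i$ alone fails exactly when activating $r_j$ alone fails, so the adversary can alternate.

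\textbf{The missing case $a=3d$.} Here the equations do have a solution: the collinear crossing move $p'=(d/2,0)$, $q'=(-d/2,0)$, i.e.\ the Z-move with $y=0$. But the Lemma is proved only for $1<s<3$, and your Step~2 silently shrinks $[d,3d]$ to $(d,3d)$. This is exactly where your approach needs $d>\delta$. Each robot must travel $2d>\delta$, so the adversary activates both and stops each after $\delta$, leaving distance $3d-2\delta\neq d$. This is the paper's Case~1, which you need only at this one value.

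\textbf{Where $d>\delta$ does not enter.} It has nothing to do with where $p''$ may lie. In ASYNC, $r_j$'s Look can occur at any instant of $r_i$'s continuous motion, so $p''$ can be any point of segment $pp'$, even with rigid movement.

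The bookkeeping obstacle you anticipate also disappears once the move length is computed correctly. Since $|p'|=d/2$, $|p|=a/2$ and $|p'+p|=dist(p',q)=d$, the parallelogram law gives $dist(p,p')^2=(a^2-d^2)/2$, not $(a^2-d^2)/4$ as in \eqref{equ:zdist}. Hence $\delta_a=t\sqrt{(a^2-d^2)/2}$ with $0<t<1$ always lies inside the segment. With your value, $t=(s-1)/2$ would overshoot for $s>1+\sqrt2$, which is why it looked like a problem.

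\textbf{A caveat inherited from the paper.} The Lemma computes $q''$ assuming that $r_j$, seeing $r_i$ at $p''$, itself performs the Z-move for distance $dist(q,p'')$. Your Step~1 says nothing about what an arbitrary algorithm does at that intermediate distance. So the conclusion $dist(p',q'')\neq d$ in your Step~2, like the paper's Case~2, is established only for algorithms that Z-move there.
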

\begin{proof}
Assume that there exists an algorithm $A$ that solves \HwPTF$(2,d)$ 
from an arbitrary initial configuration for any $d$ larger than 
the minimum moving distance $\delta$. 
We will show that we can construct an infinite execution of $A$ 
where the two robots never finish \HwPTF$(2,d)$. 
Consider an execution $P(0), P(1), P(2), \ldots$ 
starting from an initial configuration $P(0)$, 
and the sequence of inter-robot distance $d_0, d_1, d_t \ldots$ 
at time $t$ ($t = 0, 1, 2, \ldots $ ). 
We have the following three cases depending on the value of $d_t$. 
\begin{itemize}
\item \textbf{Case 1. $3d \leq d_t$.} 
If the two robots finish \HwPTF$(2,d)$ in $P(t+1)$, 
at least one robot moves and its moving distance is equal to or larger than $d$. 
By $d>\delta$, the adversary can stop that robot before it reaches its destination. 
Hence, we have an execution where $d_{t+1} \neq d$. 
\item \textbf{Case 2. $d < d_t <3d$.}  
If the two robots finish \HwPTF$(2,d)$ after $t$, 
as already discussed, the two robots $r_i$ and $r_j$ must perform the Z-move. 
By Lemma~\ref{lemma:Zmove-in-ASYNC}, there exists an activation schedule, 
where $r_i$ first observes $r_j$ and starts a Z-move, 
and then $r_j$ observes $r_i$ en route and starts another Z-move, 
so that when $r_i$ and $r_j$ reach their destinations, 
the inter-robot distance is not $d$. 
\item \textbf{Case 3. $d_t < d$.} 
If the two robots finish \HwPTF$(2,d)$ in $P(t+1)$, 
at least one robot moves away from the other robot. 
However, if $r_i$ moves away from $r_j$ when the inter-robot distance is $d_t$, 
$r_j$ may perform the same movement 
when $r_j$ is also activated at $t$ and obtain the same observation as $r_i$. 
Hence, if \HwPTF$(2,d)$ is finished in $P(t+1)$ by the movement of either $r_i$ or $r_j$, 
there exists another execution where the adversary activates both robots 
in $P(t)$, thus $d_{t+1} \neq d$. 
Otherwise, \HwPTF$(2,d)$ is finished in $P(t+1)$ by the simultaneous and same 
movement of $r_i$ and $r_j$. 
There exists another execution where the adversary activates only one robot 
in $P(t)$, thus $d_{t+1} \neq d$. 
\end{itemize}
Hence, we can construct an infinite execution of $A$, 
where $r_i$ and $r_j$ never finish \HwPTF$(2,d)$. 
\qed
\end{proof}

\subsection{Two Luminous ASYNC Robots}

In this section, we consider robots that are endowed with a \emph{light} whose color belongs to a fixed palette~\cite{DFPSY16}.
Robots can see the light color of other robots, and modify the color of their own light during the compute phase.
We present a positive result that two luminous deterministic robots can solve the \HwPTF \ problem, provided they share the same unit distance. 
Our approach is based on a bijection between a class of algorithms solving the rendezvous problem and their transformed version solving the \HwPTF \ for two robots. 

\subsubsection{Luminous ASYNC Rendezvous Algorithms}

Let $D > 0$ denote the current inter-robot distance, with robot~$r_i$ at position
$0$ and robot~$r_j$ at position~$D$ in~$r_i$'s local coordinate system.
The rendezvous algorithms we consider are those that use only three kinds of moves:
\begin{description}
  \item[{\STAY}] The robot does not move.
  \item[\HALF\; (move to midpoint)] The robot moves toward the other robot to the
        midpoint: target position~$D/2$. If the robot is not interrupted, the new inter-robot distance is~$D/2$.
  \item[\OTHER\; (move to other)] The robot moves to the other robot's current position: target position~$D$. If the robot is not interrupted, the new inter-robot distance is~$0$.
        
\end{description}
These three movements are necessary and sufficient to achieve
gathering~\cite{Her2Cols}.

A luminous \emph{rendezvous algorithm} $\mathcal{A}$ for the ASYNC model is expressed
as a table of \emph{guarded commands}: each row maps a guard (typically the
pair (my color, other's color)) to a new color and a movement chosen from
\STAY, \HALF, \OTHER.  The guard ``gathered'' holds when both robots occupy
the same position; ``skip'' means no change; ``--'' retains the current color;
rule precedence is top-to-bottom.  
As a running example we use the
\emph{Her2Cols} algorithm of Heriban et al.~\cite{Her2Cols}:

\begin{center}
\renewcommand{\arraystretch}{1.2}
\begin{tabular}{l@{\quad$\leadsto$\quad}l}
  \hline
  \textbf{Guard} & \textbf{New color,\ movement} \\
  \hline
  $(\BLACK,\BLACK)$ & \WHITE, \STAY \\
  $(\BLACK,\WHITE)$ & skip \\
  gathered          & skip \\
  $(\WHITE,\BLACK)$ & --, \OTHER \\
  $(\WHITE,\WHITE)$ & \BLACK, \HALF \\
  \hline
\end{tabular}
\end{center}

The Her2Cols algorithm solves \emph{deterministic rendezvous in ASYNC}~\cite{Her2Cols} in a self-stabilizing manner (that is, starting from
any initial configuration~\cite{Dijkstra74}, regardless of the ASYNC scheduler).

\subsubsection{From Rendezvous ASYNC Algorithms to ASYNC \HwPTF$(2, d)$}

We first present new primitive movements that are dedicated to the \HwPTF$(2, d)$. 
Let $D > 0$, $D \neq d$, denote the current inter-robot distance.
Define the \emph{signed excess distance}~$D' = D - d$.  We have $D' \in
\mathbb{R} \setminus \{0\}$; in particular $D' < 0$ when $D < d$.

In $r_i$'s local coordinate system (with $r_i$ at~$0$ and $r_j$ at~$D$), the two adapted movements are:

\begin{description}
  \item[\HALFD\; (halve the signed excess)]
    Robot~$r_i$ moves to the target position~$(D-d)/2 = D'/2$.
    \begin{itemize}
      \item If $D > d$: this point is strictly between $r_i$ and $r_j$; the robot
            moves \emph{toward} the other robot.
      \item If $D < d$: this point lies on the opposite side of $r_i$ from $r_j$
            (i.e., $(D-d)/2 < 0$); the robot moves \emph{away} from the other
            robot.
    \end{itemize}
    In both cases the new inter-robot distance is
    $\bigl|D - (D-d)/2\bigr| = (D+d)/2$,
    and the new signed excess is $D'/2$.

  \item[\OTHERD\; (reach distance exactly~$d$)]
    Robot~$A$ moves to the target position~$D - d$.
    \begin{itemize}
      \item If $D > d$: this point lies strictly between $r_i$ and $r_j$; the robot
            moves \emph{toward} the other robot.
      \item If $D < d$: this point lies on the opposite side of $r_i$ from~$r_j$
            (i.e., $D - d < 0$); the robot moves \emph{away} from the other
            robot.
    \end{itemize}
    In both cases the new inter-robot distance is $|D - (D-d)| = d$,
    and the new signed excess is~$0$.
\end{description}

Both targets are computable and consistent for the two robots  from the local snapshot alone: robot~$r_i$ observes
the distance~$D$ to~$r_j$, and~$d$ is a fixed parameter of the hardwired pattern. 
Since both
robots share the same unit distance, the value of~$d$ is identical in the
local coordinate systems of both robots.

Setting $D' = D - d$, the movements \HALFD \ and \OTHERD \ act on the signed excess
exactly as \HALF \ and \OTHER \ act on the inter-robot distance in the rendezvous
setting:
\[
  \HALFD:\; D' \longmapsto \frac{D'}{2}, \qquad
  \OTHERD:\; D' \longmapsto 0.
\]
The goal condition ``at distance~$d$'' corresponds to $D' = 0$, exactly as
``gathered'' corresponds to $D = 0$.

Given a gathering algorithm $\mathcal{A}$, define its \emph{adapted
algorithm}~$\mathcal{A}_d$ by applying the following three syntactic
substitutions to every rule of~$\mathcal{A}$:
\begin{enumerate}
  \item Replace every guard ``gathered'' with ``at distance~$d$''.
  \item Replace every movement \HALF\; with \HALFD.
  \item Replace every movement \OTHER\; with \OTHERD.
\end{enumerate}
The color palette, color guards, and color transitions are left unchanged.

Applying this substitution to Her2Cols yields Her2Cols-$d$.
The two algorithms are shown side by side below; only the three highlighted
entries differ.

\medskip
\begin{center}
\renewcommand{\arraystretch}{1.2}
\begin{tabular}{l@{\quad$\leadsto$\quad}l@{\qquad\qquad}l@{\quad$\leadsto$\quad}l}
  \hline
  \multicolumn{2}{c}{\textbf{Her2Cols} (rendezvous)}
  & \multicolumn{2}{c}{\textbf{Her2Cols-$d$} (exact distance~$d$)} \\
  \hline
  \textbf{Guard} & \textbf{Color, move}
  & \textbf{Guard} & \textbf{Color, move} \\
  \hline
  $(\BLACK,\BLACK)$ & \WHITE, \STAY
  & $(\BLACK,\BLACK)$  & \WHITE, \STAY \\
  $(\BLACK,\WHITE)$ & skip
  & $(\BLACK,\WHITE)$  & skip \\
  $\boldsymbol{\mathit{gathered}}$           & skip
  & $\boldsymbol{\mathit{at\ distance}\ d}$    & skip \\
  $(\WHITE,\BLACK)$ & --, \textbf{\OTHER}
  & $(\WHITE,\BLACK)$  & --, $\textbf{\OTHERD}$ \\
  $(\WHITE,\WHITE)$ & \BLACK, \textbf{\HALF}
  & $(\WHITE,\WHITE)$  & \BLACK, $\textbf{\HALFD}$ \\
  \hline
\end{tabular}
\end{center}
\medskip

\noindent
Guards match (my color, other's color); ``skip'' means no change;
``--'' retains the current color; rule precedence is top-to-bottom.

We now state our positive result.

\begin{theorem}[ASYNC Rendezvous $\Rightarrow$ ASYNC \HwPTF$(2, d)$]
\label{thm:transfo}
Let $d > 0$.  If a deterministic algorithm~$\mathcal{A}$ using movements
\STAY, \HALF, \OTHER\;  solves rendezvous in the ASYNC model (starting from any
initial configuration allowed by~$\mathcal{A}$), then its adapted
algorithm~$\mathcal{A}_d$ solves the \HwPTF$(2, d)$ in the ASYNC
model, starting from any initial configuration with $D > 0$ that corresponds to an allowed initial configuration of~$\mathcal{A}$ via the
map $D \mapsto D' = D - d$.
\end{theorem}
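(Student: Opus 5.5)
The plan is a simulation argument. I will build an explicit correspondence between ASYNC executions of $\mathcal{A}_d$ and ASYNC executions of $\mathcal{A}$ under the same scheduler, and then transfer the correctness of $\mathcal{A}$ to $\mathcal{A}_d$.

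\textbf{Reduction to a line.} Every move of $\mathcal{A}_d$ (\STAY, \HALFD, \OTHERD) has its target on the line through the two observed positions. By induction over the ASYNC events (Looks, starts and ends of moves, interruptions), both robots therefore remain on the line $\ell$ through their initial positions for the whole execution. Fix a global coordinate $x$ on $\ell$ with $x_i(0) < x_j(0)$.

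\textbf{Virtual positions.} For each robot I define a virtual position
\[
y_i(\tau) = x_i(\tau) + \tfrac{d}{2}, \qquad y_j(\tau) = x_j(\tau) - \tfrac{d}{2},
\]
so that $y_j - y_i = D - d = D'$ as long as the real order $x_i < x_j$ is preserved.

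\textbf{Moves correspond.} Consider a Look of $r_i$ at real distance $D$, followed by a \HALFD\ move. Its target is $x_i + D'/2$, whose virtual image is $y_i + D'/2$, which is exactly the \HALF\ target computed from virtual positions $y_i, y_j$. Likewise, \OTHERD\ maps to \OTHER, and \STAY\ maps to \STAY. Because the translation is rigid, non-rigid interruptions (stopping after moving at least $\delta$) correspond to interruptions of the virtual move of the same length. Colors and color guards are copied verbatim. The guard ``at distance $d$'' coincides with ``gathered'' in the virtual world, and the remaining guards are purely color-based.

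\textbf{Transferring correctness.} Given all this, the virtual trajectory is, event by event, an ASYNC execution of $\mathcal{A}$ from the allowed initial configuration with signed distance $D'(0)$. A 1D rendezvous algorithm built from \HALF/\OTHER\ only uses the other robot's position, not the sign of the separation, so a signed $D'$ is harmless. By hypothesis this virtual execution reaches and keeps $D' = 0$ after finite time. Hence the real execution reaches and keeps $D = d$, which is congruent to the target, proving Theorem~\ref{thm:transfo}.

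\textbf{Main obstacle.} The step I expect to be hardest is justifying that the real order $x_i < x_j$ is never violated, i.e. the real robots never coincide or cross. If they did, each robot's notion of ``toward the other'' would flip, the affine correspondence with fixed signs $\pm d/2$ would break, and moreover $\mathcal{A}_d$ is undefined at $D = 0$.

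My first attempt is a hull argument. In an ASYNC rendezvous execution with moves \HALF/\OTHER, every target lies in the convex hull of the positions occupied so far, so the virtual hull is non-increasing. When $D'(0) \le 0$ (so $-d < D'(0) \le 0$), this bounds $|D'| \le |D'(0)| < d$ forever, hence $D = D' + d > 0$. For $D'(0) > 0$ the hull alone does not prevent the virtual robots from swapping order by more than $d$.

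I would handle that case by strengthening the simulation invariant. I would check, rule by rule on the concrete guarded commands (using, e.g., that after a \HALF\ both observations lie on the same side of the virtual midpoint), that once $D' > 0$ the signed excess can only decrease to $0$ or become negative with $|D'| \le D'/2$-type bounds coming from at most one stale midpoint move. This keeps $D' > -d$, and hence $D > 0$.

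If this fails for general $\mathcal{A}$, the fallback is to define the virtual world relative to each robot's own current ``direction toward the other'' and argue that a crossing is itself mirrored by a consistent sign change of $D'$. That would be a symmetric relabeling under which the rendezvous execution is still valid, since $\mathcal{A}$ is disoriented.
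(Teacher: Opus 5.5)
\textbf{Verdict: genuine gap.} Your simulation is the same idea as the paper's proof: a scheduler- and color-preserving correspondence between executions of $\mathcal{A}_d$ and of $\mathcal{A}$ through the signed excess $D'=D-d$. The per-robot translation $y_i=x_i+d/2$, $y_j=x_j-d/2$ handles interleaved ASYNC moves more cleanly than the paper's ``movement resolution rules''. The gap is exactly where you put the main obstacle. The correspondence is valid only while every Look is taken with the real robots in their original order, and you do not establish this for $D(0)\ge 2d$. The paper does not supply it either: its Steps 1--2 assert that $D'$ obeys the same arithmetic ``regardless of the sign of $D'_t$'', which tacitly assumes no crossing.

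Your proposed strengthening is in fact false. Run Her2Cols-$d$ from $x_i=0$, $x_j=D>3d$, both robots \WHITE.
Let $r_i$ Look, turn \BLACK and plan \HALFD to $(D-d)/2$. Before it moves, let $r_j$ Look, see $(\WHITE,\BLACK)$, and plan \OTHERD to $d$, i.e., distance $d$ from $r_i$'s observed position $0$. Let both moves complete. Then $r_i$ is at $(D-d)/2>d$, to the right of $r_j$, so the real robots have crossed.
In your coordinates $y_i\colon d/2\mapsto D/2$ and $y_j\colon D-d/2\mapsto d/2$, so $D'=-(D-d)/2<-d$. A bound of the form $|D'|\le D'(0)/2$ does hold here, but it yields $D'>-d$ only when $D(0)<3d$. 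From then on, a Look taken with the order reversed computes a target whose virtual image is neither the virtual midpoint nor the other's virtual position. Likewise, ``at distance $d$'' no longer coincides with ``gathered''. So the virtual trajectory stops being an execution of $\mathcal{A}$.
Two smaller points. Your hull argument already covers every $0<D(0)<2d$, since it gives $|D'|\le|D'(0)|<d$, so the open range is $D(0)\ge 2d$. Also, rule-by-rule checks of concrete guarded commands could not settle that range for a general $\mathcal{A}$ anyway.

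The fallback does not repair this. Re-anchoring each virtual position to the robot's current direction toward the other makes both virtual positions jump by $d$ at the crossing instant. Moves in flight at that instant then end at the wrong virtual points. In the example, suppose $r_j$ moves first and $r_i$ passes it during its own move. Then $r_i$'s \HALFD ends at virtual position $D/2-d$ instead of the midpoint $D/2$, while $r_j$'s virtual position jumps from $d/2$ to $3d/2$ without moving. Disorientation of $\mathcal{A}$ gives invariance under a single isometry applied to a whole execution, not under a time-dependent relabeling. So the relabeled trajectory is not an execution of $\mathcal{A}$, and the hypothesis on $\mathcal{A}$ cannot be invoked.

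Closing the gap needs a new ingredient for the post-crossing phase. One option is to restart the correspondence from the configuration reached after a crossing. That requires $\mathcal{A}$ to be correct from that configuration, with its colors and possibly pending moves, which is more than the stated hypothesis. It also requires an argument that only finitely many crossings occur. Otherwise the statement must be restricted, for instance to $D(0)<2d$.
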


\begin{proof}
We exhibit a bijection between executions of~$\mathcal{A}_d$ and executions
of~$\mathcal{A}$ that preserves the scheduler, the color sequences, and the
goal-achievement event.

\medskip
\noindent\textbf{Step 1: Isomorphism of individual movements.}
Fix any execution of~$\mathcal{A}_d$ with initial distance $D_0 > 0$,
$D_0 \neq d$.  Define the \emph{signed excess distance at time $t$} $D'_t = D_t - d$, where
$D_t$ is the inter-robot distance at time~$t$.

\begin{itemize}
  \item \HALFD \; maps $D'_t \mapsto D'_t/2$, identical to the action of \HALF \; on
        the inter-robot distance in the rendezvous setting.
  \item \OTHERD\; maps $D'_t \mapsto 0$, identical to the action of \OTHER.
  \item \STAY\; leaves $D'_t$ unchanged, identical to \STAY.
\end{itemize}
These identities hold regardless of the sign of $D'_t$, i.e., for both
$D > d$ and $D < d$.

\medskip
\noindent\textbf{Step 2: Isomorphism of movement resolution.}
The movement resolution rules of considered rendezvous algorithms govern what happens when two robots execute movements concurrently.  The critical case is when both robots execute \HALF\; simultaneously: the companion robot's pending move is upgraded to
\OTHER\; so that both robots eventually reach the same point.

The analogous case in~$\mathcal{A}_d$ is when both robots execute \HALFD\;
simultaneously.  In $r_i$'s coordinate system, $r_i$ starts at $0$ and $r_j$ at $D$.
$r_i$ moves to $(D-d)/2$.  The companion robot $r_j$'s pending move is upgraded
to \OTHERD: $r_j$ moves to the point at distance~$d$ from $r_i$'s \emph{new}
position $(D-d)/2$, in the direction of $r_j$ from $r_i$, i.e., to
$(D-d)/2 + d = (D+d)/2$.
$r_j$ therefore moves from $D$ to $(D+d)/2$, which corresponds to moving by
$(D-d)/2$ \emph{toward} $r_i$ when $D > d$ (since $(D+d)/2 < D$) and
\emph{away} from $r_i$ when $D < d$ (since $(D+d)/2 > D$).
The new inter-robot distance is $|(D+d)/2 - (D-d)/2| = d$. 

In general, the movement resolution rules for~$\mathcal{A}_d$ are obtained from
those of~$\mathcal{A}$ by replacing every occurrence of \HALF\; by \HALFD\;, every
occurrence of \OTHER\; by \OTHERD\;, and every occurrence of ``gathered''
by ``at distance~$d$'' (\ATDIST), and interpreting movements in terms of $D'$.
Because the excess distance at time $t$ $D'_t$ follows the same arithmetic as $D_t$ in the rendezvous setting, the outcome of every movement resolution event in~$\mathcal{A}_d$
is identical (in terms of~$D'$) to the outcome of the corresponding event
in~$\mathcal{A}$.

\medskip
\noindent\textbf{Step 3: Isomorphism of color dynamics.}
By Definition, the color guards and color transitions
of~$\mathcal{A}_d$ are identical to those of~$\mathcal{A}$.  The only
additional guard, ``at distance~$d$'', corresponds (via $D' = D - d$) to
$D' = 0$, which corresponds to ``gathered'' in~$\mathcal{A}$.  Hence the color
sequence produced by~$\mathcal{A}_d$ on any execution is identical to the color
sequence produced by~$\mathcal{A}$ on the corresponding execution in terms
of~$D'$.

\medskip
\noindent\textbf{Step 4: Conclusion.}
By Steps 1--3, there is a bijection $\varphi$ between executions of~$\mathcal{A}_d$
starting with excess $D'_0 = D_0 - d$ and executions of~$\mathcal{A}$ starting
with distance $D_0$, that preserves the scheduler and the color sequence.
Under~$\varphi$, the event ``$\mathcal{A}_d$ reaches the goal $D = d$ (i.e.,
$D' = 0$)'' corresponds exactly to the event ``$\mathcal{A}$ achieves rendezvous
($D = 0$)''.

Since $\mathcal{A}$ solves rendezvous from any allowed initial configuration,
$\mathcal{A}$ drives~$D_0$ to~$0$ in every fair execution.
Equivalently, under~$\varphi$, $\mathcal{A}_d$ drives $D'_0 = D_0 - d$ to~$0$
in every fair execution, i.e., it drives $D_0$ to~$d$.
Therefore, $\mathcal{A}_d$ solves \HwPTF$(2, d)$ in~ASYNC. 
\qed
\end{proof}

Applying Theorem~\ref{thm:transfo} to the Her2Cols algorithm~\cite{Her2Cols}, and observing that oblivious robots (so, with a single color) cannot solve \HwPTF \ for two robots (Theorem~\ref{thm:impossibleObliviousASYNC}), yields the following corollary:
\begin{corollary}[Her2Cols-$d$]
\label{cor:her}
Her2Cols-$d$ solves the \HwPTF \ problem for two robots in the
ASYNC model, for any initial inter-robot distance $D > 0$.
Moreover, it uses only two colors (which is optimal) and is self-stabilizing.
\end{corollary}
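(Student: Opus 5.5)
The corollary has three parts: correctness, optimality of two colors, and self-stabilization. I will derive each from Theorem~\ref{thm:transfo} and Theorem~\ref{thm:impossibleObliviousASYNC}.

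\emph{Correctness.} I will instantiate Theorem~\ref{thm:transfo} with $\mathcal{A}$ = Her2Cols. Its hypotheses hold:
\begin{itemize}
\item Her2Cols uses only \STAY, \HALF, and \OTHER.
\item Her2Cols solves rendezvous in ASYNC from every initial configuration, since it is self-stabilizing~\cite{Her2Cols}. This covers every color assignment and every distance $D\ge 0$.
\end{itemize}
The key step is matching the allowed initial configurations under the map $D\mapsto D'=D-d$. The theorem only transfers executions whose initial excess $D'_0$ corresponds to an allowed rendezvous configuration. For Her2Cols every distance is allowed, so I must check that a \emph{negative} excess causes no trouble. The isomorphism in Steps 1--2 acts only on the signed quantity $D'$, which \HALFD\ and \OTHERD\ halve or zero. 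Because the rendezvous dynamics are invariant under reflection of the line, a rendezvous execution from distance $|D'_0|$ mirrors one from signed position $D'_0$. Hence every $D>0$ yields a valid corresponding execution, and Her2Cols-$d$ reaches $D=d$ in every fair ASYNC execution.

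The remaining part of correctness is that the robots \emph{stay} at distance $d$, as the definition of solving requires. Once $D=d$, the guard ``at distance $d$'' fires with precedence whenever neither robot is in a $(\BLACK,\BLACK)$ or $(\BLACK,\WHITE)$ state. Those two rules do not move the robot either. So no rule moves a robot while $D=d$ holds. I also need to rule out a pending move that was computed earlier and arrives later. This corresponds, via $\varphi$, to the same situation after gathering in Her2Cols, which the rendezvous correctness already excludes.

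\emph{Self-stabilization.} This follows because Theorem~\ref{thm:transfo} preserves arbitrary initial colors, and every initial color pair is allowed for Her2Cols.

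\emph{Optimality of two colors.} A one-color algorithm is simply an oblivious algorithm, since the light carries no information. Theorem~\ref{thm:impossibleObliviousASYNC} rules such algorithms out for $d>\delta$. So at least two colors are needed, at least for $d>\delta$, and Her2Cols-$d$ uses exactly two.

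\emph{Main obstacle.} I expect the hardest step to be justifying that the correspondence of Theorem~\ref{thm:transfo} is valid for negative $D'$ and for observations taken mid-move. In those cases a robot sees the true distance $D$, and the whole argument relies on the same signed-excess arithmetic as the rendezvous setting. I would handle this by noting that a robot observed mid-move lies on the segment toward its target. Its signed excess then lies between the start and target values, exactly as in rendezvous.
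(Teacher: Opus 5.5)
Your proposal follows the paper's argument: instantiate Theorem~\ref{thm:transfo} with Her2Cols, whose self-stabilizing rendezvous guarantee supplies the hypothesis for every initial distance and color pair, and invoke Theorem~\ref{thm:impossibleObliviousASYNC} for the optimality of two colors. Your extra checks (the reflection argument for negative excess $D'<0$, persistence at distance $d$, and the $d>\delta$ caveat on optimality) are refinements the paper leaves implicit; the caveat can in fact be dropped, since $\delta$ is unknown to the robots, an algorithm must work for every $\delta$, and so for any $d>0$ the adversary may take $\delta<d$.
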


Observe that our transformation applies to \emph{every} deterministic
gathering algorithm using only \STAY, \HALF, \OTHER—a necessary condition for
gathering~\cite{Her2Cols}.  This covers, in particular, Vig3Cols (for ASYNC)~\cite{Viglietta13}, and
Her2Cols (for ASYNC)~\cite{Her2Cols}: in each case the adapted algorithm
solves exact-distance-$d$ under the same synchrony model with the same number
of colors.

\section{Hardwired Pattern Formation for More than Four SSYNC Robots}
\label{sec:adjust}

In this section, we present a \HwPTF \ algorithm 
for oblivious SSYNC robots. 
We present a ``size-adjusting'' algorithm 
that shrinks or expands the SEC of the robots to a specified size $d$. 
We can obtain a \HwPTF \ algorithm 
by combining the proposed algorithm and 
some existing pattern formation algorithm~\cite{SY99}, 
that does not change the SEC of the robots, i.e., 
the robots execute the proposed algorithm until their size becomes $d$, 
and a pattern formation algorithm after that. 

Existing papers showed that the robots can form a target pattern $F$
from an initial configuration $I$ only if $\rho(F)$ is divisible by $\rho(I)$ 
due to impossibility of symmetry breaking~\cite{FYOKY15,SY99,YS10}. 
That is, the robots at symmetric positions w.r.t. $\rho(I)$ 
forever perform symmetric movement in the worst case. 
This impossibility also holds for the \HwPTF \ problem. 
We will prove that the condition is also sufficient by a 
\HwPTF \ algorithm.

\subsection{Size-adjusting Algorithm}
\label{subsec:adjust}

Consider a naive shrinking algorithm that moves the robots 
toward the center of the SEC so that the size of a resulting configuration is $d$. 
This algorithm guarantees the convergence to size $d$, 
but not the adjustment to size $d$ due to SSYNC activation of robots. 
Another issue is how to prevent the symmetricity of the robots from increasing. 

Our key idea is to keep the center $c(P(0))$ of an initial configuration $P(0)$ 
by $\rho(P(0))$ robots closest to the center.  
Intuitively, the robots can adjust their size by moving 
to a point at distance $d/2$ from $c(P(0))$ along the radius of the SEC 
while the initial symmetricity is maintained by the $\rho(P(0))$ robots 
closest to the center. 

We explain the detail with the case where the robots shrink their size. 
There are two principles for the proposed algorithm. 
First, each robot moves on the radius, 
i.e., the half line starting from $c(P(0))$ and passing through its current position. 
Second, each radius has designated positions, 
i.e., positions at distance $d \sum_{j=1}^i (1/2)^{j+1}$ $(i=1, 2, \ldots)$ from $c(P(0))$. 
Let $h_i$ denote the $i$-th designated position. 
On each radius, the proposed algorithm relocates the $i$-th robot from $c(P(0))$ at $h_i$. 
However, the first and the last designated positions can be empty 
because some robots closest to $c(P(0))$ move closer to keep $c(P(0))$ 
and some robots stop at distance $d/2$ from $c(P(0))$ to finish the adjustment. 

The center of $P(0)$ can be easily maintained 
if the robots create a multiplicity at $c(P(0))$ and no other multiplicity is created.  
However, it is impossible to resolve a multiplicity by a deterministic algorithm 
once it is created. 
Instead, the proposed algorithm keeps $c(P(0))$ by the minimum distance 
between the robots. 
For a set $P$ of points, let $d_h(P,q,d)$ denote the minimum distance 
between the points in $P$ when all points are relocated at designated positions 
when the center (i.e., the starting point of radius) is $q$. 
That is, for each point $p \in P$, we consider the half line $qp$. 
If $p$ is the $i$-th point from $q$ on $qp$, its designated position 
is $h_i = d \sum_{j=1}^i (1/2)^{j+1}$ from $q$. 
Then, $d_h(P,q,d)$ is the minimum distance among the robots 
in such a virtual relocated configuration. 

Let $\{P(0)_1, P(0)_2, \ldots, P(0)_{n/\rho(P)}\}$ be the 
$\rho(P(0))$-decomposition of an initial configuration $P(0)$. 
The proposed algorithm first sends the robots in $P(0)_1$ toward $c(P(0))$ 
to create the minimum distance among the robots during any execution. 
The diameter of such regular $\rho(P(0))$-gon must be smaller than 
$d_{min}(P(0))$, $d/16$, and $d_h(P(0),c(P(0)),d)$. 
Hence, we adopt $d^* = (1/2)(\min\{d_{min}(P(0)), d/16, d_h(P(0),c(P(0)),d)\})$. 
When $\rho(P(0))=1$, the proposed algorithm selects the first two 
elements of the $\rho(P(0))$-decomposition of $P(0)$, that does not 
change $SEC(P(0))$ when they move to the interior of $SEC(P(0))$. 
Let $P(0)_{i}, P(0)_j$ ($i<j$) be these two elements. 
Then, $P(0)_i$ moves toward $c(P(0))$ along its radius 
but stops $\delta$ before the center. 
Then, $P(0)_j$ moves toward $c(P(0))$ along its radius 
and stops at the point where the inter-robot distance with $P(0)_i$ 
becomes smaller than $d^*$. 
Thus, the center is perturbed but it is not a problem as explained later. 

In an arbitrary configuration $P$, 
the robots can agree on the center encoded by the above method by the following 
procedure. 
First, each robot checks the pairs of robots achieving the minimum 
inter-robot distance. Let $k$ be the number of such robots. 
Second, it checks whether $k \leq n/2$. 
Third, it checks whether the $k$ robots form a regular $k$-gon 
containing no point of $P$. 
Finally, it checks whether the diameter of the SEC of the regular $k$-gon is smaller than $d/16$. 
If the regular $k$-gon satisfies all these conditions, 
the proposed algorithm considers its center as the center for adjustment, 
and we say the robots can \emph{recognize the center} in $P$.

We finally check the ``fictitious'' center in some initial configurations. 
For a set $P$ of points and a point $q \not\in P$, 
we consider the symmetricity and decomposition w.r.t. $q$. 
Consider a decomposition of $P$ into regular $k$-gons centered at $q$. 
The maximum value of such $k$ is the symmetricity of $P$ w.r.t. $q$ 
denoted by $\rho^*(P,q)$. 
We also define SEC and LEC w.r.t. $q$. 
The SEC of $P$ w.r.t. $q$ is the smallest circle centered at $q$ 
and contains all points of $P$ in its interior of boundary. 
The LEC of $P$ w.r.t. $q$ is the largest circle centered at $q$ 
and contains no point of $P$ in its interior. 
We have the following property. 
\begin{observation}
\label{obs:ficticious-center}
For any set $P$ of multiple points, 
$\rho^*(P,q) = 1$ when $q \neq c(P)$. 
\end{observation}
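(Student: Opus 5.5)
The plan is to argue by contraposition. Suppose $\rho^*(P,q) = k \geq 2$ for some point $q \not\in P$. I will show that this forces $q = c(P)$. The whole argument rests on the rotational invariance of $P$ and the uniqueness of the smallest enclosing circle.

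\emph{Step 1 (invariance).} Since $\rho^*(P,q)=k\geq 2$, $P$ decomposes into subsets $Q_1,\ldots,Q_m$. Each $Q_\ell$ is a regular $k$-gon centered at $q$. Because $q\not\in P$, every vertex lies at positive distance from $q$, so each $Q_\ell$ is a genuine regular $k$-gon and not a degenerate one. Let $R$ be the rotation about $q$ by angle $2\pi/k$. Then $R$ maps the vertex set of each $Q_\ell$ onto itself. Taking the union, $R(P)=P$; this also holds as a multiset, since each part is preserved.

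\emph{Step 2 (uniqueness of SEC).} The smallest enclosing circle of a finite point set is unique. $R$ is an isometry, so $R(SEC(P))$ is a circle of the same radius enclosing $R(P)=P$. Hence it is a smallest enclosing circle of $P$, and by uniqueness $R(SEC(P)) = SEC(P)$. It follows that $R(c(P))=c(P)$.

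\emph{Step 3 (fixed point).} A rotation by $2\pi/k$ with $k\geq 2$ is not the identity, so its only fixed point is its center $q$. Therefore $c(P)=q$. Equivalently, if $q\neq c(P)$, no $k\geq 2$ is possible and $\rho^*(P,q)=1$.

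There is no substantial obstacle. The one point I would state carefully is the uniqueness of the SEC, which is a standard fact: if two distinct circles of minimum radius both enclosed $P$, then $P$ would lie in their lens-shaped intersection, and that intersection is enclosed by a strictly smaller circle. The hypothesis that $P$ has multiple points just rules out the trivial single-point case; even there, $c(P)\in P$, so the only admissible $q\not\in P$ yields $\rho^*=1$ trivially.
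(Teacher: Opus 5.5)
Your proof is correct, but it takes a different route from the paper's.

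\textbf{Your route.} You argue by symmetry. A decomposition into regular $k$-gons about $q$ with $k\ge 2$ makes $P$ invariant under the rotation $R$ by $2\pi/k$ about $q$. Since $R$ is an isometry and $SEC(P)$ is unique, $R$ must fix $c(P)$. A nontrivial rotation fixes only its center, so $c(P)=q$.

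\textbf{The paper's route.} The paper argues directly from the geometry of $SEC(P)$. It takes the diameter $st$ of $SEC(P)$ perpendicular to $q\,c(P)$. It then uses the fact that the closed half-arc of $SEC(P)$ on the far side from $q$ contains some $u\in P$, which holds because the points of $P$ on the SEC cannot all lie in an open semicircle. Next it considers the circle $C'$ about $q$ through $u$. This circle meets $SEC(P)$ only at $u$ and one other point $u'$, with $\angle uqu'<\pi$. Hence the points of $P$ on $C'$, being confined to the short arc between $u$ and $u'$, cannot form a regular $k$-gon about $q$ with $k\ge 2$.

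\textbf{Comparison.} Your argument is shorter and needs only the uniqueness of the SEC, which you justify correctly. It makes no use of the semicircle property or of where $q$ lies relative to $SEC(P)$. It also carries over unchanged to any isometry-equivariant center, such as the centroid, and to higher dimensions. The paper's argument is more concrete: it exhibits a specific orbit, that of a point on the far half of $SEC(P)$, that breaks the symmetry about $q$. However, it relies on an extra characterization of the SEC. As written, it also leaves implicit why the points of $P$ on $C'$ are confined to the arc inside the SEC disk.
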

\begin{proof}
Let $st$ be the diameter of $SEC(P)$ perpendicular to line segment $q c(P)$. 
The arc $\stackrel{\frown}{st}$ of $SEC(P)$ (including the two endpoints) 
in the opposite side w.r.t. $q$ contains at least one point $u \in P$. 
If $\rho^*(P,q) > 1$, the circle $C'$ centered at $q$ with radius $dist(qu)$ 
contains at least one symmetric point $u'$ for $u$. 
The two circles $SEC(P)$ and $C'$ has at most two intersections, 
and let $u'$ be the another intersection. 
Because $\angle uqu' < \pi$, circle $C'$ must contain at least another point 
to have $\rho^*(P,q) > 1$, a contradiction. 
\qed
\end{proof}

Such a fictitious center can be generated by the movement toward $c(P)$ 
due to SSYNC activation. 
Whenever the center $q$ is recognized in a configuration $P$, 
the size-adjusting algorithm checks $\rho^*(P,q)$ 
and rearranges the robots that keeps the minimum distance around $q$ 
according to $\rho^*(P,q)$. 
Hence, symmetricity of the configuration does not increase during 
any execution. 

When a robot moves along its radius, 
it moves cautiously so that it does not create a new minimum distance. 
That is, on each radius if the $i$-th robot $r_i$ wants to move to its 
designated position $h_i$ 
and some other robots are on the trajectory, 
$r_i$ just wait these robots to reach their designated positions. 
This does not result in a deadlock because the robots do not need to 
pass the other robots. 

When the robots expand to size $d$, 
the center is recognized by a regular polygon of size 
smaller than $(1/2)(\min\{d_{min}(P), rad(P)/16\})$ 
and the robots are not required to relocate on designated positions. 

Consequently, the adjusting algorithm consists of the following 
three steps. 
Let $P$ be the current configuration and $P_1, P_2, \ldots, P_{n/\rho(P)}$ be 
the $\rho(P)$-decomposition of $P$. 
\begin{itemize} 
\item Step 1. Form a single center. 
When the robots cannot recognize a center in $P$, 
$P_1$ moves toward $c(P)$ to create a regular $\rho(P)$-gon 
with a size discussed above. 
\item Step 2. Move to designated positions when $rad(P) > d$. 
When the robots can recognize the single center $q$ in $P$, 
let $\{P_1', P_2', \ldots, P_{n/\rho^*(P,q)} \}$ be the decomposition of $P$ 
into these regular $\rho^*(P,q)$-gons. 
Then, each robot in $P \setminus P_1' \cup P'_{n/\rho^*(P,q)-1}$ move to their 
designated positions. 
\item Step 3. Shrink or expand w.r.t. the recognized center. 
When the robots can recognize the single center $q$ in $P$, 
and $P_2', P_3', \ldots, P_{n/\rho^*(P,q)-1}'$ are on their designated positions, 
the robots in $P_{n/\rho^*(P,q)}'$ move toward or away from $q$ 
to form a circle with diameter $d$ centered at $q$. 
\end{itemize}

Figure~\ref{fig:adjust} shows an example of shrinking execution of the adjusting algorithm. 
The red circle is a circle of diameter $d$, 
and the blue circle is a circle of diameter $d/2$. 
In an initial configuration $P(0)$ (Figure~\ref{fig:adjust} ($a$)), 
$\rho(P(0)) = 4$ and only two robots on $LEC(P)$ moves toward $c(P(0))$ in Step 1. 
The robots can recognize the center in $P(1)$ (Figure~\ref{fig:adjust} ($b$)). 
Then, other robots move to their designated positions 
(Figure~\ref{fig:adjust} ($a$)).
In $P(3)$ (Figure~\ref{fig:adjust} ($d$)), 
the robots finish Step 2 and 
the robots on $SEC(P(3))$ moves toward the circle of diameter $d$. 
\begin{figure}[t]
 \begin{tabular}{ccc}
 \begin{minipage}[t]{0.3\hsize}
   \centering
   \includegraphics[width=4cm]{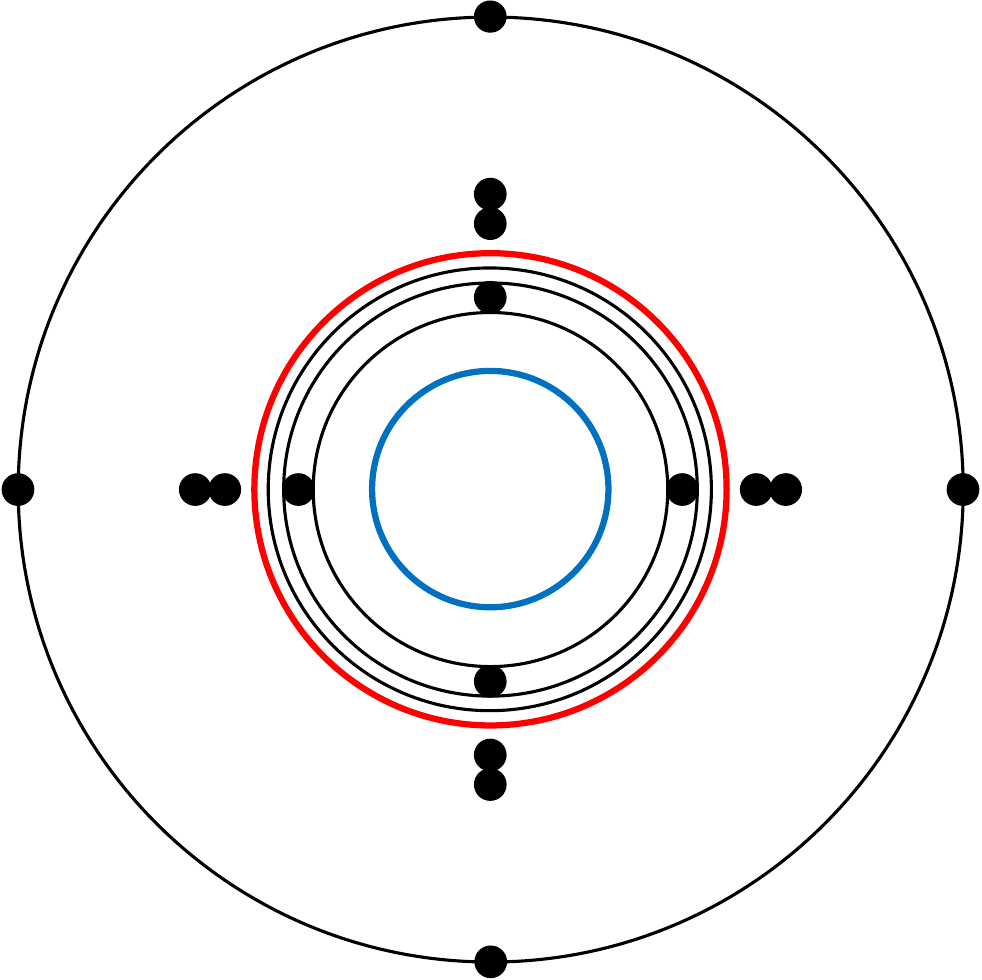} \\ 
   ($a$) $P(0)$
  \end{minipage} & 
  \begin{minipage}[t]{0.3\hsize}
   \centering
   \includegraphics[width=4cm]{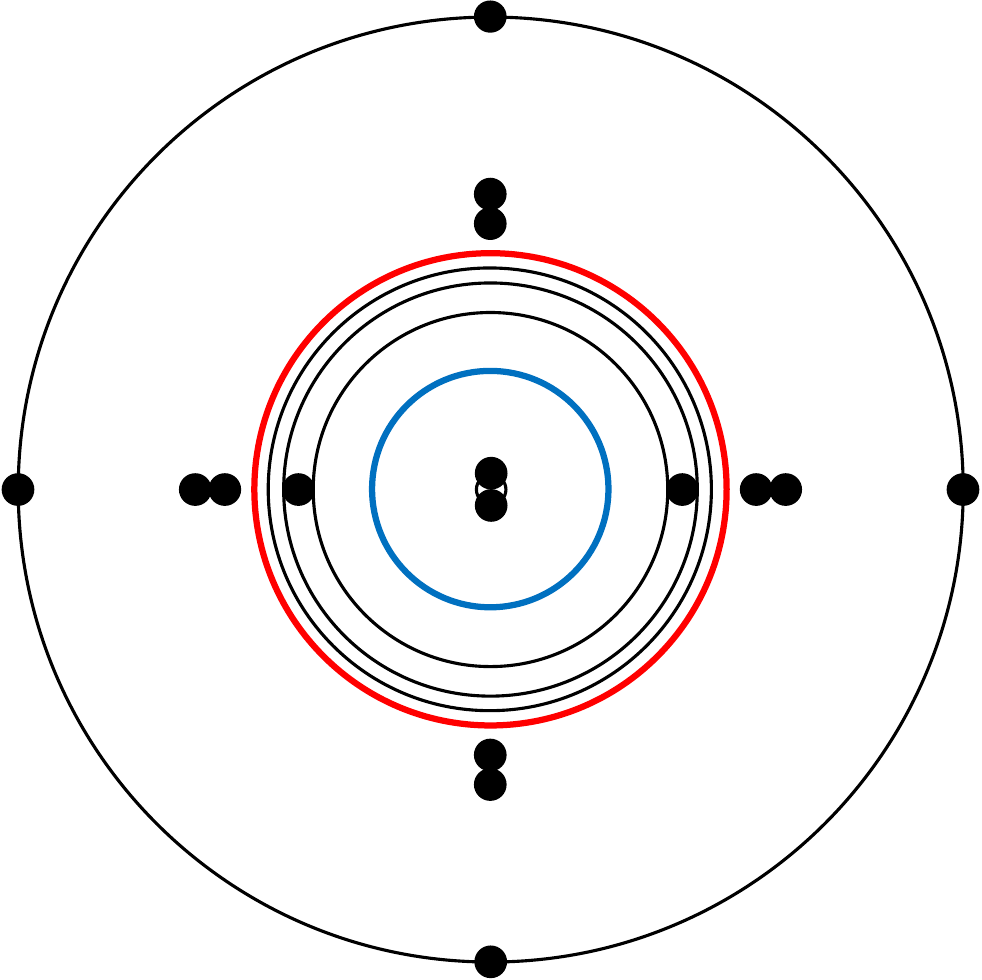} \\
   ($b$) $P(1)$
  \end{minipage} & 
  \begin{minipage}[t]{0.3\hsize}
   \centering
   \includegraphics[width=4cm]{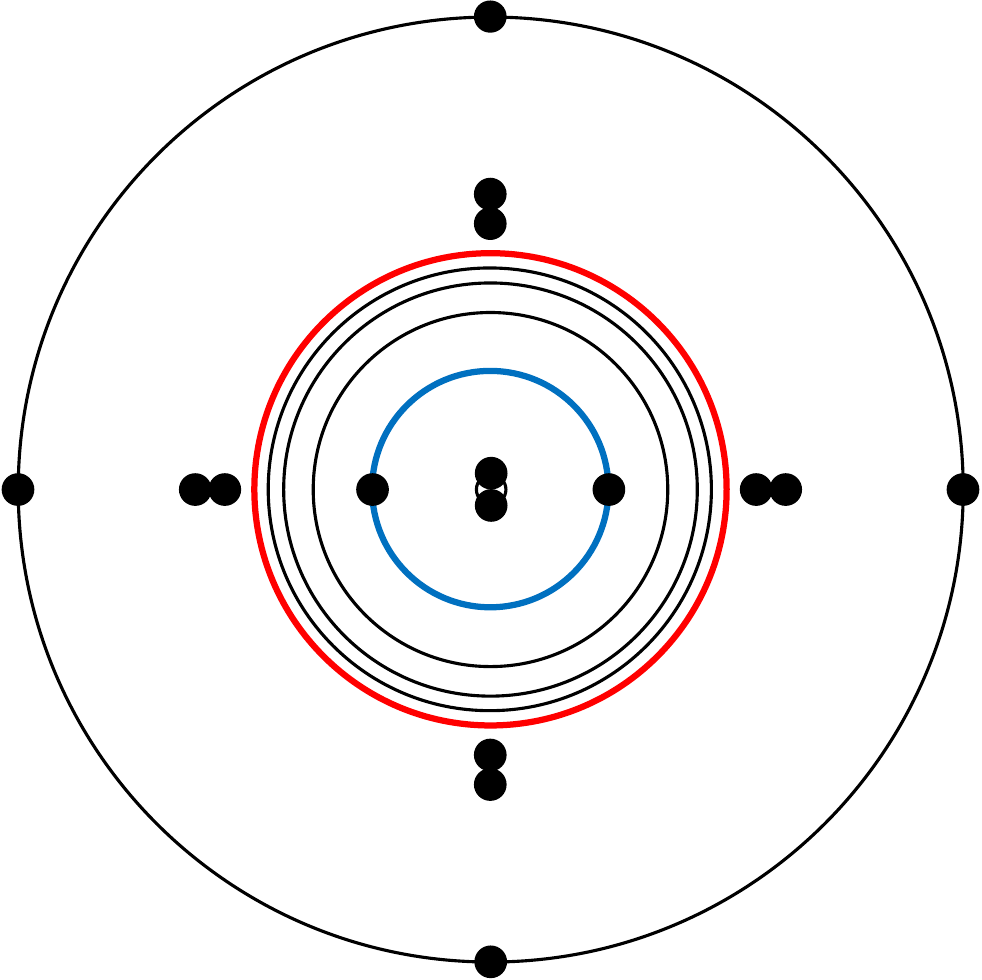} \\
   ($c$) $P(2)$
  \end{minipage} \\ 
  \begin{minipage}[t]{0.3\hsize}
   \centering
   \includegraphics[width=4cm]{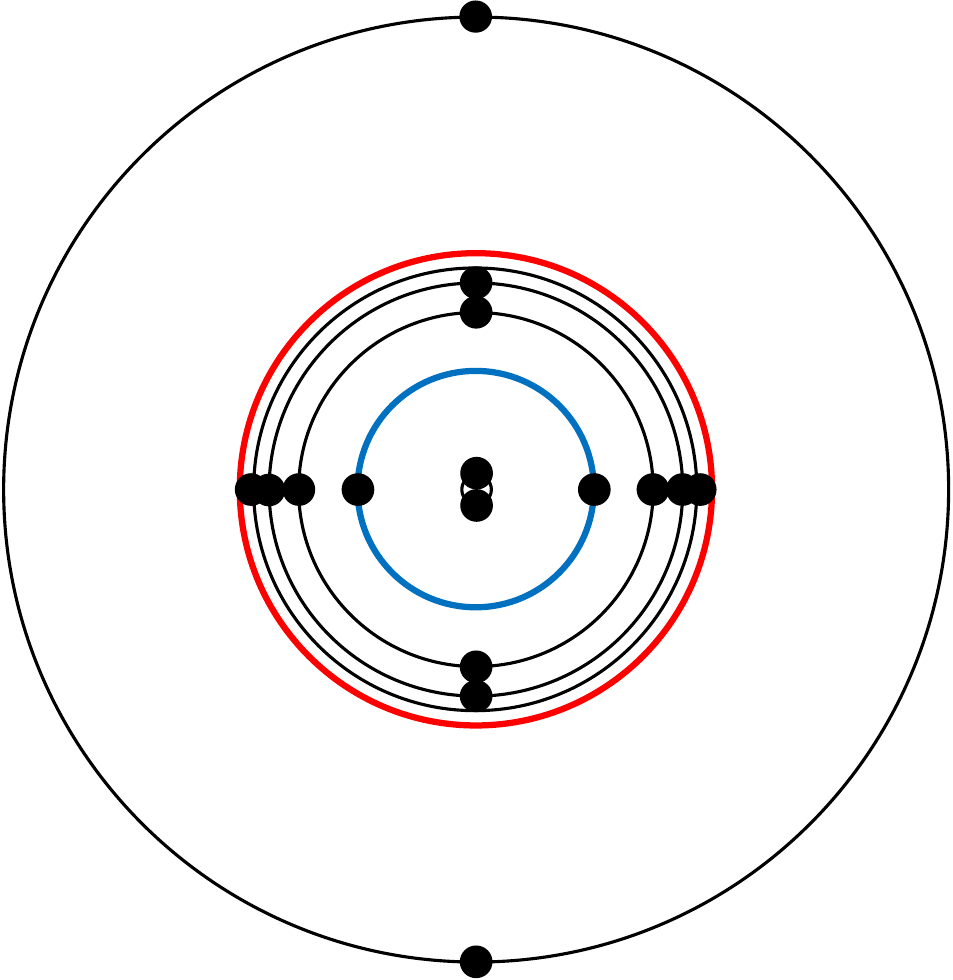} \\
    ($d$) $P(3)$
 \end{minipage} & 
  \begin{minipage}[t]{0.3\hsize}
   \centering
   \includegraphics[width=4cm]{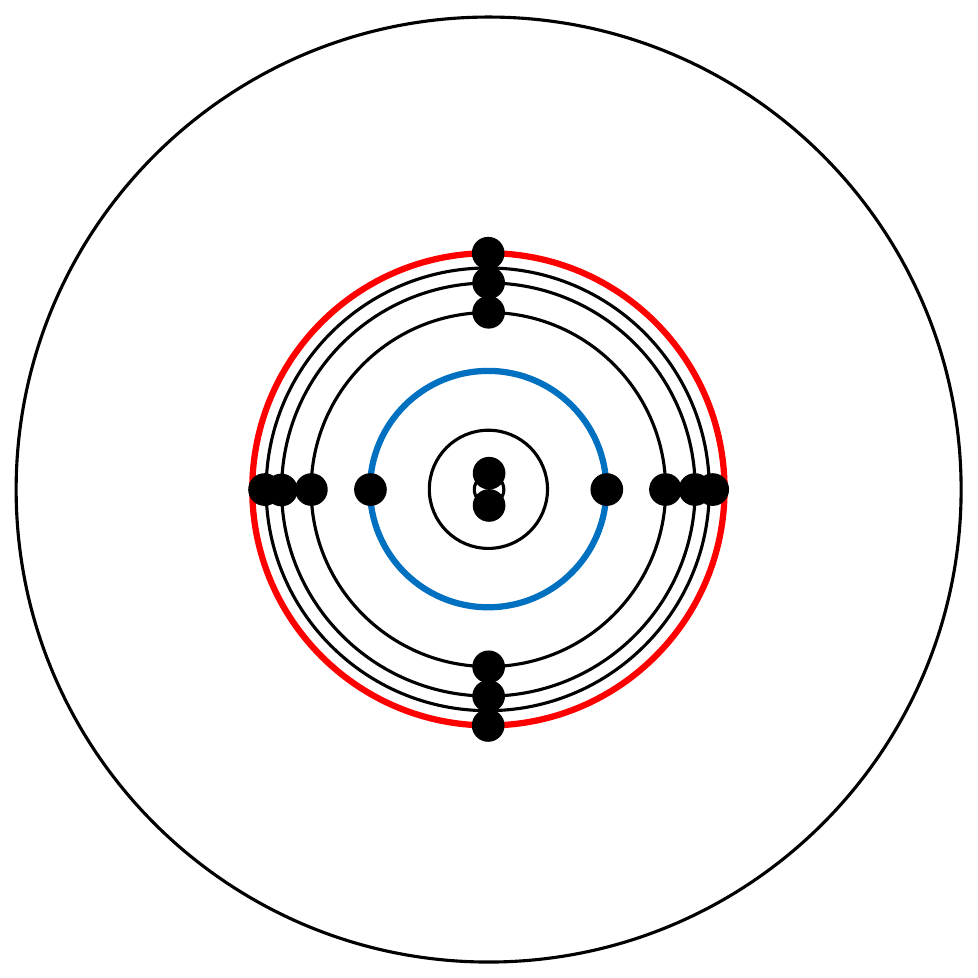}\\
      ($e$) $P(4)$
  \end{minipage} 
\end{tabular}
\caption{Example of the adjusting algorithm when the robots shrink}
\label{fig:adjust}
\end{figure}

The proposed algorithm works for more than four robots 
because when the symmetricity of the configuration $P$ is one, 
it sends two robots to the interior of SEC, 
which requires at most three robots to be kept. 

Algorithm~\ref{alg:SAdjust} shows the proposed size-adjusting algorithm 
for robot $r_i$. 

\begin{algorithm}
\caption{\textsc{Hardwired\_Pattern\_For\_Multi\_Robots$(P, T, d)$} at robot $r_i$}
\label{alg:SAdjust}
\begin{tabbing}
xxx \= xxx \= xxx \= xxx \= xxx \= xxx \= xxx \= xxx \= xxx \= xxx
\kill 
{\bf Input at robot $r_i$} \\ 
\> $P$: current configuration \\ 
\> $T$: target pattern \\ 
\> $d$: size of the target pattern \\
{\bf Notations} \\ 
\> $P_1, P_2, \ldots, P_{n/\rho(P)}$: $\rho(P)$-decomposition of $P$ \\ 
\> $c'(P)$: single recognized center in $P$ if any \\ 
\> $P'_1, P'_2, \ldots, P'_k$: $\rho^*(P,c'(P))$-decomposition of $P$\\ 
\> $SEC'(P)$: the smallest enclosing circle of $P$ centered at $c'(P)$ \\ 
\> $LEC'(P)$: circle centered at $c'(P)$ and containing $P_1'$ \\ 
\> $d^*$: Minimum distance to make a center recognized \\ 
{\bf Movement rule} \\ 
\> All movements are along $r_i$'s radius toward $c(P)$ unless specified otherwise \\
\\
{\bf Algorithm} \\ 
\> {\bf If $rad(P) = d/2$} {\bf then} terminate \\ 
\> {\bf Else} \\ 
\> \> {\bf If} a single center is not recognized in $P$ {\bf then} // Step 1 \\ 
\> \> \> {\bf If} $\rho(P) = n$ {\bf then} move to the point at distance $d/2$ from $c(P)$\\ 
\> \> \> {\bf If} $\rho(P) = 1$ {\bf then} \\ 
\> \> \> \> Let $P_i$ and $P_j$ ($i < j$) be the elements with the minimum indices \\ 
\> \> \> \> of the $\rho(P)$-decomposition of $P$ satisfying $SEC(P) = SEC(P \setminus (P_i \cup P_j))$. \\
\> \> \> \> {\bf If} $r_i \in P_i$ {\bf then} move toward $c(P)$ but stop before reaching $c(P)$\\ 
\> \> \> \> {\bf Else if} $r_i \in P_j$ {\bf then} \\ 
\> \> \> \> \> move to a position at distance $d^*$ from $P_1$ \\ 
\> \> \> {\bf If}  ($\rho(P) \neq 1, n$) and ($r_i \in P_1$) {\bf then} \\ 
\> \> \> \> move to a position at distance $d^*/2$ from $c(P)$\\ 
\> \> {\bf Else} // $c'(P)$ is recognized \\ 
\> \> \> {\bf If} (radius of $SEC'(P)$ is larger than $d/2$) and \\ 
\> \> \> \> ($r_i$ is not on its designated position) {\bf then} // Step 2 \\ 
\> \> \> \> {\bf If} (($\rho^*(P,c'(P)) > 1$) and ($r_i \in P \setminus (P_1' \cup P_k')$)) or \\ 
\> \> \> \> \> (($\rho^*(P,c'(P)) = 1$) and ($r_i \in P \setminus (P_1' \cup P_2' \cup P_{k-2}' \cup P_{k-1'} \cup P_k')$)) {\bf then} \\ 
\> \> \> \> \> Move cautiously toward $r_i$'s designated position \\ 
\> \> \> \> \> but stop before creating a new minimum distance pair\\ 
\> \> \> {\bf Else} // Step 3 \\ 
\> \> \> \> {\bf If} (radius of $SEC'(P)$ is not $d/2$) {\bf then} \\
\> \> \> \> \> {\bf If} (($\rho^*(P,c'(P)) > 1$) and ($r_i \in P_k'$)) or \\   
\> \> \> \> \> \> (($\rho^*(P,c'(P)) = 1$) and ($r_i \in P_{k-2}' \cup P_{k-1'} \cup P_k')$)) {\bf then} \\   
\> \> \> \> \> \> Move to the position at distance $d/2$ from $c'(P)$\\ 
\> \> \> \> {\bf Else} // $\rho^*(P,c'(P)) = 1$ and $rad(P) \neq d/2$ \\ 
\> \> \> \> \> // i.e., $SEC'(P)$ has an empty half arc \\  
\> \> \> \> \> {\bf If} $r_i$ is the counterclockwise endpoint \\ 
\> \> \> \> \> of the longest empty arc of $SEC'(P)$ {\bf then} \\ 
\> \> \> \> \> \> Move counterclockwise along $SEC'(P)$ \\ 
\> \> \> \> \> \> until it form a diameter with another robot on $SEC'(P)$ 
\end{tabbing}
\end{algorithm}

\subsection{Correctness}

Let ${\mathcal C}_n$ be the set of all configurations of the $n$ robots. 
Then, we consider the decomposition of ${\mathcal C}_n$ into the 
terminal configurations of the three steps. 
\begin{itemize}
    \item ${\mathcal C}_{d} \in {\mathcal C}_n$ be the set of all configurations $P$ of the $n$ robots, where the diameter of $SEC(P)$ is $d$. 
    \item ${\mathcal C}_{n,\text{center}} \subseteq {\mathcal C}_n \setminus {\mathcal C}_{d}$ 
        be the set of all configurations of the $n$ robots, where the center is recognized. 
    \item ${\mathcal C}_{n,\text{aligned}} \subseteq {\mathcal C}_{n, center}$ 
        be the set of all configurations of the $n$ robots, 
        where the robots are on their designated positions. 
\end{itemize}

\begin{lemma}
\label{lemma:center}
In any execution $P(0), P(1), P(2),\ldots$ of Algorithm~\ref{alg:SAdjust} 
starting from an initial configuration $P(0) \in {\mathcal C}_n \setminus {\mathcal C}_{d}$, 
there exists finite $t$ such that $P(t) \in {\mathcal C}_{n, \text{center}}$, 
$P(t)$ does not contain any multiplicity, and $\rho(P(0)) \geq \rho(P(t))$. 
\end{lemma}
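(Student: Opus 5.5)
We argue by following Step~1 of Algorithm~\ref{alg:SAdjust} from $P(0)$ and splitting on $\rho(P(0))$: the cases $\rho(P(0))=n$, $1<\rho(P(0))<n$, and $\rho(P(0))=1$. A preliminary remark covers the case where $P(0)$ already lies in ${\mathcal C}_{n,\text{center}}$. Then $t=0$ works, because the initial configuration contains no multiplicity by assumption. Otherwise, in each case we show three things. First, only the designated robots move. Second, all of them move along their own radii toward $c(P(0))$. Third, the quantities that Step~1 uses ($c(P)$, $SEC(P)$, the $\rho(P)$-decomposition, and $d^*$) stay the same in every intermediate configuration, as long as no center has been recognized yet. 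This invariance means that oblivious robots activated at later times compute the same targets as in $P(0)$. We obtain it from two facts: the moving robots come from $P_1$, or from $P_i,P_j$ chosen so that $SEC(P)=SEC(P\setminus(P_i\cup P_j))$; and radial motion toward the center cannot change the SEC. We also need the intermediate configurations to keep the same ordering of the decomposition, so that the moving subset is again identified as $P_1$ (or as $P_i,P_j$). This holds because the moving robots only get closer to $c(P)$ and stay strictly inside $LEC$ of the remaining points.

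\textbf{Case $1<\rho(P(0))<n$.} Each robot of $P_1$ moves along its radius to distance $d^*/2$ from $c(P(0))$. Non-rigid movement brings each robot at least $\delta$ closer per activation, and the adversary can make only finitely many such partial moves before each robot reaches its target. Fairness of SSYNC then gives a finite $t$ at which all of $P_1$ lie on the circle of radius $d^*/2$. Since the robots travel on distinct radii, they form a regular $\rho(P(0))$-gon of diameter $d^*<\min\{d_{min}(P(0)),d/16\}$. This polygon is the unique set of pairs achieving the minimum distance, its size is $\rho\le n/2$, it contains no robot, and its diameter is below $d/16$. So the center is recognized, and it equals $c(P(0))$.

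The delicate part is the intermediate configurations, in which only some robots of $P_1$ have moved. There we must rule out two things: a spurious recognized center formed by a subset of $P_1$, and a higher symmetricity. For this we use Observation~\ref{obs:ficticious-center} together with the radial structure. A partially moved $P_1$ can only reduce the rotational symmetry about $c(P(0))$. Any regular $k$-gon achieving the minimum distance in such a configuration must consist of robots already at radius $d^*/2$. Under SSYNC these robots arrive on a common circle, and in the worst case they form a regular sub-polygon. Such a configuration is then legitimately in ${\mathcal C}_{n,\text{center}}$, with $\rho^*\le\rho(P(0))$.

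\textbf{Case $\rho(P(0))=n$.} All robots move radially to distance $d/2$. At every step the configuration stays inside a circle about $c(P(0))$, and a robot that reaches its target stays there. If all robots are activated in the same round and reach their targets, the configuration lies in ${\mathcal C}_d$. Otherwise symmetry is broken, $\rho$ drops below $n$, and we fall into one of the other two cases, with $\rho$ not increased.

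\textbf{Case $\rho(P(0))=1$.} $P_i$ approaches to within $\delta$ of $c(P)$, and then $P_j$ stops at distance $d^*$ from it. Again every movement is finite, and the resulting pair is the unique minimum-distance pair, with diameter below $d/16$. The recognized center is therefore a perturbed point $q$ near $c(P(0))$, and Observation~\ref{obs:ficticious-center} gives $\rho^*(P,q)=1=\rho(P(0))$.

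\textbf{No multiplicity.} All motion is radial and toward the center, robots stop strictly before $c(P)$, and distinct robots of $P_1$ lie on distinct radii. Hence no two robots ever coincide.

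\textbf{Main obstacle.} The hardest step is the invariance argument across partial SSYNC activations combined with non-rigid stops. Concretely, we must show that every intermediate configuration either still identifies the same moving subset and the same $c(P)$, or already lies in ${\mathcal C}_{n,\text{center}}$ with symmetricity at most $\rho(P(0))$. For this we would first establish that $SEC$ is unchanged and that the moving robots remain strictly inside the $LEC$ of the others, which fixes the decomposition ordering. We would then apply Observation~\ref{obs:ficticious-center} to bound $\rho^*$ at any newly recognized center.
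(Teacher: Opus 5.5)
\textbf{The invariance claim fails, so your termination step has no fixed target.} Your proof rests on the claim that, as long as no center is recognized, $c(P)$, $SEC(P)$, the $\rho(P)$-decomposition and $d^*$ are the same in every intermediate configuration. From this you conclude that later-activated robots recompute the targets they had in $P(0)$. Your termination step depends on it: each activation brings a robot at least $\delta$ closer to a fixed target, so fairness finishes the job. The claim holds for $c(P)$ and $SEC(P)$, but it fails for the decomposition and for $d^*$.

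Take $\rho(P(0))=4$ with $P_1$ a square. Let the scheduler activate one robot of $P_1$ and stop it en route. That robot is now the unique robot at its distance from $c(P)$, so $\rho(P(1))=1$, and in general $P(1)\notin{\mathcal C}_{n,\text{center}}$. In $P(1)$ the algorithm therefore runs the $\rho(P)=1$ branch: the moved robot, as $P_i$, heads toward $c(P)$, and another former member of $P_1$, as $P_j$, moves to distance $d^*$ from it. These are not the targets computed in $P(0)$.

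Independently, $d^*$ is recomputed from the current $d_{min}$. That value shrinks as the members of $P_1$ move inward and approach each other. So the target $d^*/2$ drifts even when all of $P_1$ move together and $\rho$ is preserved.

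Your dichotomy in the ``main obstacle'' paragraph is: either the same moving subset and targets, or already in ${\mathcal C}_{n,\text{center}}$. It misses exactly this third case, where $\rho$ has dropped and the applicable rule has changed, yet no center is recognized. Keeping the movers strictly inside the $LEC$ of the others fixes which robots are nearest the center. It does not fix $\rho$, the decomposition, or which branch of Step~1 applies.

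\textbf{What is needed: a progress argument that tolerates recomputation.} This is the route the paper takes. Only $c(P)$ has to be invariant, and it is. For $1<\rho<n$ the outermost element never moves, and for $\rho=1$ everything outside $P_i\cup P_j$ stays put. Then, whatever the current decomposition and branch are, the robots nearest $c(P)$ move radially toward $c(P)$. So the distance of the moving robots to the center keeps decreasing, and the paper uses this to conclude that Step~1 ends in ${\mathcal C}_{n,\text{center}}$.

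\textbf{The case $\rho(P(0))=n$.} Your claim ``otherwise symmetry is broken'' does not follow. The adversary can activate all robots and stop them at a common distance, keeping $\rho=n$, so this case also needs the $\delta$-progress argument. Moreover, the sub-case you describe, where all robots reach distance $d/2$, ends in ${\mathcal C}_d$. That set is disjoint from ${\mathcal C}_{n,\text{center}}$ by definition, so this branch does not deliver the stated conclusion. You should either state explicitly that the conclusion must be read as ${\mathcal C}_{n,\text{center}}\cup{\mathcal C}_d$, or rule this outcome out. The paper's own proof also glosses over this point.
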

\begin{proof}
We consider a configuration $P \not\in {\mathcal C}_{n, \text{center}}$. 
The robots can agree on the total ordering among the elements 
of the $\rho(P)$-decomposition of $P$, say $\{P_1, P_2, \ldots, P_{n/\rho(P)}\}$. 
Algorithm~\ref{alg:SAdjust} sends $P_1$ to the interior of 
$LEC(P)$ so that their destinations form a new minimum distance. 
In the SSYNC model, at least one robot moves toward $c(P)$. 
If $\rho(P) < n$, $P_{n/\rho(P)}$ does not move 
and in the resulting configuration $P'$, $c(P') = c(P)$. 
Thus, the distance to $c(P')$ and the moving robot becomes smaller than that in $P$. 
During the transition from $P$ to $P'$, $\rho(P')$ may become smaller 
due to SSYNC activation and non-rigid movement. 
However, the first element of the $\rho(P')$-decomposition of $P'$ 
moves toward $c(P')$ to form a new minimum distance. 
In this way, Step 1 of Algorithm~\ref{alg:SAdjust} is repeated until 
the robots reach some configuration in $ {\mathcal C}_{n, \text{center}}$. 
The movements of Step 1 does not create any multiplicity 
because for each radius of $SEC(P)$, at most one robot nearest to $c(P)$ 
moves toward $c(P)$. 

If $\rho(P)=n$, we have the following two cases. 
If the $SEC(P')$ of the resulting configuration $P'$ is the same as $SEC(P)$, 
the distance between $c(P')(=c(P))$ and the moving robot becomes smaller. 
Otherwise, $SEC(P')$ is contained in $SEC(P)$, 
and the distance between $c(P')$ and the moving robot becomes smaller. 
In the same way as the previous case,  
the robots eventually reach some configuration in $ {\mathcal C}_{n, \text{center}}$. 
\qed
\end{proof}

\begin{lemma}
\label{lemma:align}
We consider an initial configuration $P(0) \in {\mathcal C}_{n, \text{center}}$
such that $rad(P(0))$ is larger than $d/2$. 
In any execution $P(0), P(1), P(2),\ldots$ of Algorithm~\ref{alg:SAdjust}, 
there exists finite $t$ such that $P(t) \in {\mathcal C}_{n,\text{aligned}}$ 
and $\rho(P(0)) \geq \rho(P(t))$. 
\end{lemma}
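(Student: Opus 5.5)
The plan is to show three things in turn: the recognized center never changes while Step~2 runs, symmetricity never increases, and every robot reaches its designated position after finitely many steps. Throughout, $q=c'(P(0))$ denotes the center recognized in $P(0)$.

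\textbf{Invariance of the recognized center.} I will prove by induction on $t$ that, as long as $P(t)\notin{\mathcal C}_{n,\text{aligned}}$, the following hold:
\begin{itemize}
\item $P(t)\in{\mathcal C}_{n,\text{center}}$ and $c'(P(t))=q$;
\item the robots of $P_1'$ (and, when $\rho^*(P,q)=1$, also those of $P_2'$) stay where they are;
\item the outermost elements ($P_k'$, or $P_{k-2}'\cup P_{k-1}'\cup P_k'$ when $\rho^*=1$) also stay where they are.
\end{itemize}
This holds because Algorithm~\ref{alg:SAdjust} moves only robots outside these sets during Step~2.

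The key point is that the regular polygon formed by $P_1'$ stays the unique pair set realizing the minimum distance. By the choice of $d^*=(1/2)\min\{d_{min}(P(0)),d/16,d_h(P(0),q,d)\}$, the polygon's diameter is strictly below both $d_h$ and the original $d_{min}$. The cautious-move rule says a moving robot stops before creating a new minimum-distance pair. So every pair distance that appears during Step~2 stays above the polygon's diameter, whether measured at intermediate points of non-rigid movements or at designated positions. The remaining recognition conditions (the count $k\le n/2$, emptiness of the polygon, diameter $<d/16$) are unaffected, because no robot enters the interior of the polygon. Robots move along radii toward designated positions $h_i\ge d/4$, which lie far outside a polygon of diameter $<d/32$. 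For the same reason the SEC w.r.t.\ $q$ is unchanged, since the outermost robots do not move; hence $rad>d/2$ is preserved and we stay in Step~2.

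\textbf{Symmetricity.} The robots move only along their own radii from $q$, and the rank of each robot on its radius is preserved, since no robot passes another. The moves are therefore determined by the $\rho^*(P,q)$-decomposition. Any new symmetry would have to be centered at $q$, by Observation~\ref{obs:ficticious-center}, and the fixed sets $P_1'$ and $P_k'$ already bound the symmetricity w.r.t.\ $q$ by its current value. Hence $\rho(P(t))\le\rho(P(0))$. Multiplicities cannot arise because robots on the same radius keep their order and stop before meeting.

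\textbf{Progress.} This is the main obstacle: ruling out deadlock and Zeno-like behavior. I will use the potential $\Phi(t)=\sum_r |dist(r,q)-h_{i(r)}|$, where $i(r)$ is the rank of $r$ on its radius. Each activated robot that is not blocked moves at least $\min\{\delta,\text{remaining}\}$ toward its target, so $\Phi$ strictly decreases by a bounded amount.

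For deadlock freedom I will argue along each radius. Take the robot whose target lies in the direction of motion with no robot between it and its target, for example the innermost robot that still has to move outward or the outermost one that still has to move inward. This robot is never blocked. The targets $h_i$ are strictly increasing in $i$, and the ranks are preserved, so once it is placed the next robot becomes unblocked.

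It remains to check that stopping before a new minimum pair cannot block forever. The pairwise distances among designated positions are at least $d_h>2d^*$, which exceeds the current minimum, so final positions never violate the rule. SSYNC fairness then gives that every robot reaches its designated position in finitely many steps, so some $P(t)\in{\mathcal C}_{n,\text{aligned}}$.
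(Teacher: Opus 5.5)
Your skeleton matches the paper's sketch: cautious moves along radii from $q$, no two robots on one radius heading toward each other, and the innermost polygon pinning both the center and the symmetricity. However, the invariant your proof rests on is not maintained by Algorithm~\ref{alg:SAdjust} as written.

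\textbf{The invariant.} You identify the minimum-distance polygon with the exempt set $P_1'$ (or $P_1'\cup P_2'$ when $\rho^*=1$). In general $\rho^*(P,q)$ only divides the polygon size $k$. So the exempt set becomes a proper subset of the polygon as soon as a partial SSYNC activation of a symmetric orbit of middle robots lowers $\rho^*(P,q)$ below $k\ge 3$. For instance, take a square polygon and let $\rho^*$ drop to $1$. Then $P_3'$ and $P_4'$ are square vertices, are not exempt, and have designated position $h_1=d/4$. Once one of them moves, the robots in minimum-distance pairs are three vertices of a square, no center is recognized, and your induction breaks.

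Your symmetricity step relies on the same identification. Preserving the polygon only gives $\rho(P(t))\mid k$, not $\rho(P(t))\le\rho(P(0))$. Moreover, the exempt sets $P_{k-2}'$ and $P_{k-1}'$ are chosen by tie-breaking, so symmetries of later configurations need not preserve them.

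\textbf{The choice of $d^*$.} ``By the choice of $d^*$'' does not follow from $P(0)\in{\mathcal C}_{n,\text{center}}$. For such a $P(0)$, $d_{min}(P(0))$ is the polygon side itself. The value $d^*$ was computed from the configuration in which Step~1 started, and with respect to its center, which is not $q$ in the $\rho=1$ case. In fact, for an arbitrary $P(0)\in{\mathcal C}_{n,\text{center}}$, two rank-$1$ robots on radii at a tiny angle have designated positions closer than the polygon side. The cautious rule can then keep one of them from its target forever. So the hypothesis must be restricted and the polygon robots kept fixed. The paper's one-paragraph proof glosses over these same points.

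\textbf{Progress.} Your extremal robots are reversed. If the robots of ranks $i$ and $i+1$ on a radius both lie below $h_i$, the rank-$i$ robot has the rank-$(i+1)$ robot between it and its target. The robots that cannot be blocked on their radius are the outermost robot still moving outward and the innermost robot still moving inward. For the former, everything above it sits at or beyond $h_{i+1}>h_i$. Even this needs the robot above to be non-exempt, since for $\rho^*=1$ an exempt $P_{k-2}'$ may sit below $h_i$.

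Moreover, the cautious rule forbids a new minimum pair with \emph{any} robot, and you only check endpoints. A robot sliding along its radius can pass within $y\sin\theta$ of a robot at distance $y$ from $q$ on a radius at angle $\theta$. So $d_h\ge 2d^*$ for the designated positions does not by itself exclude a permanent cross-radius wait. You need to show such waits are temporary and acyclic before $\Phi$ yields termination.

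One usable fact, with $d_h$ taken with respect to $q$ and the current ranks: if robots of ranks $a$ and $b$ lie on radii at angle $\theta$, both radii carry a robot of rank $\min(a,b)$. Hence $d_h\le 2h_{\min(a,b)}\sin(\theta/2)$.
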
 
\begin{proof}
Algorithm~\ref{alg:SAdjust} sends the robots to their designated positions 
by cautious walk so that the robots do not create a new minimum distance. 
If no robot can move in $P(t) \not\in {\mathcal C}_{n,\text{aligned}}$, 
there exists at least one pair of robots on the same radius 
heading to opposite directions. 
This is a contradiction because assignment of the designated positions 
does not contain any such intersections. 
Hence, the designated positions are gradually occupied and 
the robots eventually reach a configuration in ${\mathcal C}_{n,\text{aligned}}$. 

During the execution, the symmetricity among the robots does not increase 
because $P(0) \in {\mathcal C}_{n, \text{center}}$ and 
the robots nearest to the recognized center keeps the symmetricity. 
\qed
\end{proof}

\begin{lemma}
\label{lemma:adjust}
In any execution $P(0), P(1), P(2),\ldots$ of Algorithm~\ref{alg:SAdjust} 
there exists finite $t$ where $P(t) \in {\mathcal C}_{d}$. 
\end{lemma}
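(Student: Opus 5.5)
The plan is to chain Lemma~\ref{lemma:center} and Lemma~\ref{lemma:align}, and then analyse Step~3 directly. If $P(0) \in {\mathcal C}_d$ there is nothing to prove. Otherwise, Lemma~\ref{lemma:center} gives a finite time $t_1$ with $P(t_1) \in {\mathcal C}_{n,\text{center}}$, no multiplicity, and $\rho(P(t_1)) \le \rho(P(0))$.

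We then split on the radius of $SEC'(P(t_1))$.
\begin{itemize}
\item If it exceeds $d/2$, Lemma~\ref{lemma:align} gives a finite $t_2 \ge t_1$ with $P(t_2) \in {\mathcal C}_{n,\text{aligned}}$.
\item If it is smaller than $d/2$ (the expanding case), Step~2 is skipped, and we go directly to Step~3 with $t_2 = t_1$.
\end{itemize}
We must also check that Step~2 never destroys recognizability of the center. This uses the cautious movement rule: no new minimum-distance pair is created. It also uses the choice of $d^*$, which is below $d_h(P(0),c(P(0)),d)$ and $d/16$. Together these keep the small regular polygon around $c'(P)$ the unique minimum-distance structure, so the recognized center $q = c'(P)$ stays fixed from $t_1$ on.

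Next, in Step~3 from $P(t_2)$, only the robots of the outermost class(es) move. These are $P_k'$, or $P_{k-2}' \cup P_{k-1}' \cup P_k'$ when $\rho^* = 1$. Each moves radially toward the point at distance $d/2$ from $q$. Since the target distance $d/2$ from $q$ does not depend on the current configuration, the argument is a potential-function one. Take $\Phi(t) = \sum_{r} \bigl|\,dist(q,p_r(t)) - d/2\,\bigr|$ over the moving robots. By the minimum moving distance $\delta$, each activation either lands its robot exactly on the circle of radius $d/2$ about $q$ or decreases $\Phi$ by at least $\delta$. A robot already on that circle does not move again.

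We must verify that these moves keep the configuration inside the Step~3 branch:
\begin{itemize}
\item the designated positions $h_i < d/2$ remain strictly inside the target circle;
\item the radial moves create neither a multiplicity nor a new minimum distance, because $d^* \le d/32$ while the gaps between designated positions are at least $d/2^{i+1}$ and $\ge d_h$;
\item the center polygon is untouched.
\end{itemize}
Hence after finitely many SSYNC steps all outer robots lie on the circle of radius $d/2$ about $q$.

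Finally, we argue that this circle is actually $SEC(P)$, so that $P \in {\mathcal C}_d$. All other robots lie strictly inside it. If the outer robots on it are not contained in any open half-circle, then $SEC(P)$ equals this circle. When $\rho^*(P,q) > 1$ this is automatic: a regular $\rho^*$-gon with $\rho^* \ge 2$ centred at $q$ meets every closed half-circle. When $\rho^* = 1$, the last branch of Algorithm~\ref{alg:SAdjust} applies: the endpoint of the longest empty arc moves along $SEC'(P)$ until it forms a diameter with another robot on the circle. This takes finitely many steps by the same $\delta$-argument on arc length. It keeps the radius at $d/2$ and leaves no open half-arc, which forces $rad(P) = d/2$.

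I expect the main obstacle to be this last part, for two reasons. First, under SSYNC the partial arrivals of Step~3 can temporarily make the real $SEC(P)$ differ from $SEC'(P)$, and the termination test $rad(P) = d/2$ could fire prematurely or be missed. Second, we must show that the circular move never re-enters Step~2 or perturbs the recognized center. I would handle this by proving that $rad(SEC'(P))$ is monotone toward $d/2$ throughout Step~3. I would also show that the termination guard $rad(P) = d/2$ is reached exactly when the half-arc condition holds, which is what the $\rho^* = 1$ branch enforces.
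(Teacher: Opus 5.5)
This is essentially the paper's own argument. You reach ${\mathcal C}_{n,\text{center}}$ and then ${\mathcal C}_{n,\text{aligned}}$ by Lemmas~\ref{lemma:center} and~\ref{lemma:align}, going straight to Step~3 in the expanding case, and then argue that Step~3 moves the outermost robots radially to distance $d/2$ from the fixed recognized center, without creating a new minimum distance, until the size is $d$. Your $\delta$-potential and your use of the half-arc branch to pass from $SEC'(P)$ to $SEC(P)$ spell out what the paper leaves implicit, and the worry about the guard firing early is harmless, since $rad(P)=d/2$ at any time already gives $P(t)\in{\mathcal C}_d$; note that, like the paper, you assume without proof that partial SSYNC moves in Step~3 leave the $\rho^*(P,q)$-decomposition, and hence the set of moving robots, unchanged.
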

\begin{proof}
We first consider the case where the size of $P(0)$ is larger than $d$. 
By Lemma~\ref{lemma:center} and \ref{lemma:align}, 
the robots eventually reach a configuration $P(t)$ in ${\mathcal C}_{n,\text{aligned}}$. 
In $P(t)$, the robots on $SEC(P(t))$ move toward the recognized center. 
By $P(t)$ in ${\mathcal C}_{n,\text{aligned}}$, there is no other robot 
on the trajectory, and this movement does not create a new minimum distance. 
Hence, in $P(t+1)$, the recognized center is the same, 
and the robots at distance larger than $d/2$ from the recognized center 
move toward the center. 
In this way, Step 3 of Algorithm~\ref{alg:SAdjust} is repeated until 
the size of the current configuration becomes $d$. 

When the size of $P(0)$ is smaller than $d$, 
we have the same discussion. 
\qed
\end{proof}

Consequently, we have the following theorem. 
\begin{theorem}
\label{theorem:adjust}
More than four oblivious SSYNC robots can translate an initial configuration $P$ 
with $rad(P) \neq d/2$ into another configuration $P'$ with 
$rad(P') = d/2$ and $\rho(P') \leq \rho(P)$ by Algorithm~\ref{alg:SAdjust}. 
\end{theorem}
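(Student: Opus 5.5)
The plan is to chain the three correctness lemmas along the step structure of Algorithm~\ref{alg:SAdjust}, tracking three invariants along the way: the symmetricity never increases, no multiplicity is created, and once $rad(P) = d/2$ the algorithm terminates, so the configuration stays in ${\mathcal C}_d$. I start from an arbitrary $P = P(0)$ with $rad(P) \neq d/2$, i.e., $P(0) \in {\mathcal C}_n \setminus {\mathcal C}_d$.

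First, if no center is recognized in $P(0)$, I apply Lemma~\ref{lemma:center}. It gives a finite $t_1$ with $P(t_1) \in {\mathcal C}_{n,\text{center}}$, no multiplicity, and $\rho(P(t_1)) \leq \rho(P(0))$. I must also handle the possibility that some intermediate configuration already lies in ${\mathcal C}_d$. This can happen, for instance, in the $\rho(P)=n$ branch, where robots move directly to distance $d/2$ from $c(P)$. In that case I stop early, and the symmetricity bound still holds because every intermediate configuration inherits it from the same argument.

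Next I split on $rad(P(t_1))$ relative to $d/2$. If it is larger, I invoke Lemma~\ref{lemma:align} to reach $P(t_2) \in {\mathcal C}_{n,\text{aligned}}$ with $\rho(P(t_2)) \leq \rho(P(t_1))$. Then Lemma~\ref{lemma:adjust} yields a finite $t_3$ with $P(t_3) \in {\mathcal C}_d$. If it is smaller, Step~3 applies directly, since no alignment is required when expanding, and Lemma~\ref{lemma:adjust} again gives ${\mathcal C}_d$. Once the first line of the algorithm fires, every robot terminates, so $P' = P(t_3)$ is a fixed point with $rad(P') = d/2$. The required inequality $\rho(P') \leq \rho(P)$ then follows by transitivity through $t_1$, $t_2$ and $t_3$.

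The main obstacle is showing that $\rho$ does not increase during Step~3 and when Step~3 hands over to termination. Lemma~\ref{lemma:adjust} does not state this explicitly. To fill the gap, I would argue that the recognized center $q$ stays fixed throughout Step~3. The robots keeping the minimum distance around $q$ do not move, and the moving robots stay on their own radii without creating new minimum pairs, so $q$ remains recognized. Any rotational symmetry of $P(t)$ must therefore fix the recognized regular polygon and hence $q$. Observation~\ref{obs:ficticious-center} shows that a center $q \neq c(P(t))$ forces $\rho^*(P(t),q) = 1$. So $\rho(P(t)) \leq \rho^*(P(t),q)$, and $\rho^*(P(t),q)$ is preserved because the robots of $P'_k$ move symmetrically along radii from $q$.

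For the special $\rho^*=1$ case, where robots move along $SEC'(P)$ to close an empty half arc, I would note that this case only arises with $\rho = 1$, and a value of $1$ trivially satisfies the bound. Finally, the hypothesis of more than four robots is used only to guarantee that Step~1 with $\rho(P)=1$ leaves enough robots on $SEC(P)$, which is already built into Lemma~\ref{lemma:center}.
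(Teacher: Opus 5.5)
Your decomposition is the paper's. There the theorem follows by concatenating Lemma~\ref{lemma:center}, Lemma~\ref{lemma:align} and Lemma~\ref{lemma:adjust}. The bound $\rho(P')\le\rho(P)$ is carried by the first two lemmas and, for Step~3, by the remark in Section~\ref{subsec:adjust} that the robots holding the minimum distance around $q$ keep the symmetricity from increasing. You are right that Lemma~\ref{lemma:adjust} says nothing about $\rho$.

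However, the argument you supply for Step~3 rests on a false step. $\rho^*(P(t),q)$ is \emph{not} preserved, because in SSYNC with non-rigid movement the robots of $P'_k$ need not move symmetrically. The scheduler may activate only some of them, and the adversary may stop the activated ones at different radii. For instance, if $P'_k$ is an antipodal pair $A,B$ and only $A$ is activated and stopped at some radius $R_1$, then $\rho^*$ drops from $2$ to $1$. If $B$ is later stopped at the same radius $R_1$, it jumps back to $2$. So $\rho^*$ is neither invariant nor monotone step by step, and $\rho(P(t_3))\le\rho(P(t_2))$ cannot be obtained by transitivity along individual steps.

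Your treatment of the arc-closing move has a similar problem. You check the symmetricity of the configuration \emph{before} the move. But that move changes a robot's angle about $q$, and it is the resulting configuration whose symmetricity must be bounded.

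A repair is to compare every later configuration with a fixed reference rather than step by step. Any rotation preserving $P(t)$ fixes $c(P(t))$ and permutes the robots realizing the minimum distance. Hence it preserves the recognized regular $k$-gon and fixes its center $q$. By design, Step~3 neither moves these robots nor creates a new minimum pair. So throughout Step~3 the symmetricity $\rho(P(t))$ divides $k$, and $\rho(P(t))=1$ whenever $q\neq c(P(t))$.

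This fact, not Observation~\ref{obs:ficticious-center}, is what justifies your inequality $\rho(P(t))\le\rho^*(P(t),q)$. The Observation is what you need for the reverse link $\rho^*(P(t_2),q)\le\rho(P(t_2))$ that closes your chain.

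What then remains is twofold. First, check that $k\le\rho(P(0))$ when $\rho(P(0))\ge 2$, since the polygon comes from the first element of a $\rho$-decomposition in Step~1. Second, when $\rho(P(0))=1$ and the recognized polygon is a $2$-gon, show that the terminal configuration, including after the arc-closing move, is not centrally symmetric about $q$.
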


The size-adjusting algorithm 
allows the robots on the SEC centered at 
the recognized center to move in the final step. 
Hence, we have the following Corollary. 
\begin{corollary}
The size-adjusting algorithm 
terminates  
as soon as the size of SEC of the robots becomes the specified size $d$.  
\end{corollary}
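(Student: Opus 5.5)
The corollary asks that the size-adjusting algorithm terminates as soon as the size of $SEC$ reaches $d$. I would prove it by inspecting Algorithm~\ref{alg:SAdjust} and checking two things. First, every configuration $P$ with $rad(P)=d/2$ is a fixed point. Second, the step that first achieves $rad=d/2$ is exactly the step in which the robots on $SEC'(P)$ complete their Step~3 move. Combined with Lemma~\ref{lemma:adjust}, which guarantees that some $P(t)\in{\mathcal C}_d$ is reached, this gives termination at the first such $t$.

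For the fixed-point property, I would read off the first line of Algorithm~\ref{alg:SAdjust}: if $rad(P)=d/2$, every activated robot executes ``terminate'', that is, it stays put. Under the SSYNC scheduler, every robot activated in $P(t)\in{\mathcal C}_d$ observes the same configuration and does not move. Hence $P(t+1)=P(t)$, and by induction $P(t+i)=P(t)$ for all $i\geq 1$. This is exactly the stopping requirement in the definition of solving the problem. It holds regardless of which phase produced $P(t)$, so no earlier step can overshoot once the size is $d$.

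The main obstacle is showing that no robot keeps moving after $rad=d/2$ has been reached. Concretely, I must rule out the case where the true $SEC$ has diameter $d$ but some robot still perceives $rad(SEC'(P))\neq d/2$ around the recognized center, or is still in the middle of Step~2. Because the test $rad(P)=d/2$ is evaluated before any reference to $c'(P)$, this case cannot arise: the guard uses the actual $SEC$, not $SEC'$. I would also use that in SSYNC a robot cannot be caught mid-move by a later Look, so there are no pending movements.

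The remaining point is why the final change of size happens in Step~3. In Steps~1 and~2, every movement is along a radius toward (or, when expanding, relative to) the center, and it involves only robots strictly inside $SEC'(P)$: the robots in $P'_1$, or those moving to designated positions $h_i<d/2$. These movements preserve the robots on $SEC'(P)$. So during Steps~1 and~2 the size changes only when the robots on $SEC'(P)$ move, which is Step~3, where they move directly to distance $d/2$ from $c'(P)$. The first time all of them lie on that circle, we have $rad(P)=d/2$, and by the previous paragraph the execution stops there.
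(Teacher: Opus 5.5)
Your fixed-point argument is correct, and it is all the corollary needs. The guard $rad(P)=d/2$ is tested on the true $SEC(P)$ before any other test, and SSYNC leaves no pending moves, so every $P\in{\mathcal C}_d$ is frozen. With Lemma~\ref{lemma:adjust}, the execution therefore stops at the first $t$ with $P(t)\in{\mathcal C}_d$, whichever step produced $P(t)$, as you observe yourself. The paper's one-line justification (in the final step the robots on the SEC centered at the recognized center move) is in the same spirit as your second claim. It describes the typical last step; the termination statement itself rests only on the first guard.

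Your second claim is one of your two pillars, and it is false for Algorithm~\ref{alg:SAdjust} as written, in two places.
\begin{itemize}
\item In Step~1 with $\rho(P)=n$, for example a regular pentagon, no center is recognized: all $n$ adjacent pairs realize $d_{min}$, so $k=n>n/2$. Every robot lies on $SEC(P)$ and is sent directly to distance $d/2$ from $c(P)$. In an execution where all robots are activated and reach their targets, $rad(P)=d/2$ is first attained in Step~1, where neither $c'(P)$ nor $SEC'(P)$ exists. (Also, the moving robots in Step~1 are $P_1$, not $P'_1$.)
\item In Step~3 with $\rho^*(P,c'(P))=1$, the recognized center may differ from $c(P)$. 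The robots of $P'_{k-2}\cup P'_{k-1}\cup P'_k$ arriving at distance $d/2$ from $c'(P)$ need not give $rad(P)=d/2$, since they may all lie in an open half of $SEC'(P)$. The algorithm's last branch, where the counterclockwise endpoint of the longest empty arc slides along $SEC'(P)$ until it forms a diameter, exists for exactly this case. So the final size-changing move need not be radial.
\end{itemize}
The underlying slip is identifying $SEC(P)$ with $SEC'(P)$: leaving the robots on $SEC'(P)$ in place does not keep $SEC(P)$ fixed when the two centers differ. Delete your last paragraph, or replace it by the remark that whatever move first yields $rad(P)=d/2$, the first guard fires at every later activation. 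With that change the proof stands.
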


Fujinaga et al. proposed a pattern formation algorithm for oblivious ASYNC robots~\cite{FYOKY15}, 
that does not change the SEC during any execution. 
We can obtain a \HwPTF \ algorithm 
by combining Algorithm~\ref{alg:SAdjust} with the pattern formation algorithm in \cite{FYOKY15}:  
the robots switch between the two algorithms based on the size of their SEC, i.e., 
if their SEC is larger than the size of the target pattern, 
each robot executes our size adjusting algorithm,  
otherwise each robot executes the pattern formation algorithm. 

\begin{theorem}
When $n (\geq 5)$ oblivious SSYNC robots agree on the unit distance, 
they can solve the \HwPTF \ problem 
for a target pattern $F$ 
from any initial configuration $I$ if and only if $\rho(I)$ divides $\rho(F)$. 
\end{theorem}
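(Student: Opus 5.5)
We prove the two directions separately. For necessity, suppose $\rho(I)$ does not divide $\rho(F)$. We reduce to the classical impossibility for (scale-free) pattern formation of Suzuki and Yamashita~\cite{SY99,YS10} and Fujinaga et al.~\cite{FYOKY15}. Any algorithm solving \HwPTF$(n,F,d)$ also forms a configuration similar to $F$, so it solves the ordinary PTF problem for $F$. The only difference is that the robots now share a unit distance, so we must check that the symmetry argument survives this extra power. We let the adversary choose the local coordinate systems of the $\rho(I)$ robots in each regular $\rho(I)$-gon of the $\rho(I)$-decomposition of $I$ to be rotations of each other by $2\pi/\rho(I)$ about $c(I)$, all with the common unit length. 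We use the fully synchronous schedule (an SSYNC schedule) with rigid movement. Then symmetric robots obtain identical views and compute symmetric destinations, so $\rho(I)$ divides $\rho(P(t))$ for every $t$ by induction. Because the units coincide, this construction is consistent with the model. Hence $F$ (or any congruent copy) can never be reached when $\rho(I)\nmid\rho(F)$. One boundary case needs care: if $c(P(t))$ becomes occupied, the convention $\rho=1$ is used. We handle it as in~\cite{YS10}, since the symmetric robots never reach the center simultaneously unless they form a multiplicity, which again keeps the symmetry.

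For sufficiency, assume $\rho(I)\mid\rho(F)$ and $n\ge 5$. The algorithm is the composition described just before the statement. While the size of the SEC differs from $d$, the robots run Algorithm~\ref{alg:SAdjust}. Once $rad(P)=d/2$, they run the pattern formation algorithm of~\cite{FYOKY15}, scaled to the current SEC; that algorithm never changes the SEC. By Theorem~\ref{theorem:adjust} (built on Lemmas~\ref{lemma:center}, \ref{lemma:align} and \ref{lemma:adjust}), every execution from $I$ reaches in finite time a configuration $P'$ without multiplicity, with $rad(P')=d/2$ and $\rho(P')\le\rho(I)$. The corollary then says the adjusting phase terminates exactly at that moment.

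The next step is to check that the formation algorithm still applies from $P'$, which requires $\rho(P')\mid\rho(F)$. The inequality $\rho(P')\le\rho(I)$ alone is not enough. I would strengthen it to divisibility from the construction: all moves of Steps 1--3 are along radii and act uniformly on elements of the $\rho^*$-decomposition. So every symmetry of $P(t)$ about the recognized center permutes robots in a way compatible with the original decomposition, giving $\rho(P(t))\mid\rho(I)$ by induction. By transitivity, $\rho(P')\mid\rho(F)$.

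After the switch, \cite{FYOKY15} keeps $SEC(P)$ fixed, so the formed pattern is similar to $F$ with diameter $d$, i.e.\ congruent to $F$. Moreover the guard $rad(P)=d/2$ stays true, so the robots never switch back to the adjusting algorithm. The robots terminate as soon as the pattern is formed, which gives the required suffix $P_{t+i}=Z'(F)$.

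The main obstacle is the interface between the two algorithms, in two respects. The first is proving divisibility, rather than mere non-increase, of symmetricity through the adjusting phase. This includes transient configurations with fictitious centers (Observation~\ref{obs:ficticious-center}), where I would argue that such configurations have $\rho^*=1$ and hence symmetricity $1$, which trivially divides $\rho(F)$. The second is that \cite{FYOKY15} is an ASYNC algorithm assuming a fixed SEC, so its correctness under SSYNC is immediate but its initial configurations must be arbitrary with the given SEC. I would verify this directly against the hypotheses of \cite{FYOKY15}.
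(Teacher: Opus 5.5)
Your overall route is the paper's. For necessity you use the symmetric-execution argument of \cite{SY99,YS10,FYOKY15}; your remark that the rotated local frames can all carry the common unit is exactly why it survives. For sufficiency you run Algorithm~\ref{alg:SAdjust} until $rad(P)=d/2$ and then the SEC-preserving algorithm of \cite{FYOKY15}. You are also right that the conclusion $\rho(P')\le\rho(I)$ of Theorem~\ref{theorem:adjust} does not yield $\rho(P')\mid\rho(F)$; the paper simply concatenates the two algorithms without addressing this. But your repair does not go through, for two reasons.
First, under SSYNC with partial activation and non-rigid stops, the robots of one decomposition element do \emph{not} move uniformly. That non-uniformity is precisely how Step~1 breaks symmetry, so ``uniform action'' cannot be what forbids new symmetries. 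Second, Observation~\ref{obs:ficticious-center} concerns $\rho^*(P,q)$, the symmetry about the recognized center $q$, not $\rho(P)$, which is measured about $c(P)$. What actually gives $\rho(P)=1$ when $q\neq c(P)$ is that any rotational symmetry of $P$ must map the unique minimum-distance polygon to itself, and hence fix $q$.

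More seriously, the invariant $\rho(P(t))\mid\rho(I)$ you want to prove by induction is false for Algorithm~\ref{alg:SAdjust} as written. Take $n=6$, $d=16$ and $I=\{(10,0),(-20,0),(\pm 50,0),(0,\pm 50)\}$. Then $\rho(I)=1$, no center is recognized, $P_i=\{(10,0)\}$, $P_j=\{(-20,0)\}$, and $d^*=\frac12\min\{30,1,2\}=\frac12$.
\begin{itemize}
\item Activate only the first robot and stop it at $(\frac14,0)$.
\item Then activate only the second robot. Its target, on its own radius at distance $d^*$ from the first, is $(-\frac14,0)$.
\end{itemize}
The resulting configuration is invariant under the half-turn about $c(P)$, so its symmetricity is $2$, and its center is recognized. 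Suppose the adversary chose the local frames of the three antipodal pairs to be half-turns of each other. From then on it can activate all robots with rigid moves and keep this symmetry forever. Hence no asymmetric $F$ of size $16$ is ever formed, although $\rho(I)=1$ divides $\rho(F)$. This even violates the weaker $\rho(P')\le\rho(I)$ that you import from Theorem~\ref{theorem:adjust}.

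The invariant that does hold concerns the innermost layer. Suppose $c(P)=c(I)$ and the robots nearest $c(P)$ lie on rays of the original regular $\rho(I)$-gon $P_1$. Then every rotation about $c(P)$ preserving $P$ preserves these robots, and therefore has order dividing $\rho(I)$. A pair of center keepers, as in the $\rho(P)=1$ branch, gives no protection against a half-turn. So that branch of the algorithm has to be changed, not merely analyzed more carefully. The case where partial activation in Step~1 leaves a pair as the minimum-distance polygon while $\rho(I)$ is odd also has to be checked.
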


\section{Conclusion}
In this paper, we newly introduced the \HwPTF \ problem, 
that requires the robots to form a target pattern in a specified size. 
We started with the \HwPTF \ problem for two robots and 
showed the problem is solvable for oblivious SSYNC robots 
when the specified size is not zero. 
We then showed the problem is not solvable for oblivious ASYNC robots 
while it is solvable by oblivious ASYNC robots equipped with lights.  
Finally, we presented a size-adjusting algorithm for the oblivious 
SSYNC robots and showed that we can obtain a \HwPTF \ algorithm 
by combining it with an existing pattern formation algorithm. 

There are interesting future directions. 
First, the \HwPTF \ problem for more than two oblivious ASYNC robots is open. 
We believe that we can extend the impossibility for two ASYNC robots 
for initial configurations where $n$ ($n >2$) ASYNC robots 
forming a regular $n$-gon. 
On the other hand, size-adjusting could be possible 
for other initial configurations. 
Another direction is to consider \HwPTF \ problem in the 3D space; obviously this setting would require considering 3D rotational symmetry~\cite{YUKY17}. 

Finally, all results in this paper consider deterministic algorithms. We conjecture that our impossibility results for two deterministic ASYNC robots can be extended to the case of two probabilistic ASYNC robots.

\subsubsection*{Acknowledgements}
We thank Nicola Santoro and Paola Flocchini 
for originally proposing the hardwired pattern formation problem studied in this paper.

%
%
%
\bibliographystyle{splncs04}
\bibliography{bibitems}

@article{Courtieu15,
  author       = {Pierre Courtieu and
                  Lionel Rieg and
                  S{\'{e}}bastien Tixeuil and
                  Xavier Urbain},
  title        = {Impossibility of gathering, a certification},
  journal      = {Inf. Process. Lett.},
  volume       = {115},
  number       = {3},
  pages        = {447--452},
  year         = {2015},
  url          = {https://doi.org/10.1016/j.ipl.2014.11.001},
  doi          = {10.1016/J.IPL.2014.11.001},
  bibsource    = {dblp computer science bibliography, https://dblp.org}
}

@article{Dijkstra74,
  author       = {Edsger W. Dijkstra},
  title        = {Self-stabilizing Systems in Spite of Distributed Control},
  journal      = {Commun. {ACM}},
  volume       = {17},
  number       = {11},
  pages        = {643--644},
  year         = {1974},
  url          = {https://doi.org/10.1145/361179.361202},
  doi          = {10.1145/361179.361202},
  bibsource    = {dblp computer science bibliography, https://dblp.org}
}

@article{Bramas23,
  author       = {Quentin Bramas and
                  Anissa Lamani and
                  S{\'{e}}bastien Tixeuil},
  title        = {The agreement power of disagreement},
  journal      = {Theor. Comput. Sci.},
  volume       = {954},
  pages        = {113772},
  year         = {2023},
  url          = {https://doi.org/10.1016/j.tcs.2023.113772},
  doi          = {10.1016/J.TCS.2023.113772},
  bibsource    = {dblp computer science bibliography, https://dblp.org}
}

@inproceedings{Viglietta13,
  author       = {Giovanni Viglietta},
  title        = {Rendezvous of Two Robots with Visible Bits},
  booktitle    = {Proceedings of the 9th International Symposium on Algorithms
                  and Experiments for Sensor Systems, Wireless Networks and Distributed
                  Robotics (ALGOSENSORS)},
  pages        = {291--306},
  year         = {2013},
  doi          = {10.1007/978-3-642-45346-5\_21},
}

@article{Her2Cols,
  author       = {Xavier D{\'{e}}fago and
                  Adam Heriban and
                  S{\'{e}}bastien Tixeuil and
                  Koichi Wada},
  title        = {Using model checking to formally verify rendezvous algorithms for
                  robots with lights in Euclidean space},
  journal      = {Robotics Auton. Syst.},
  volume       = {163},
  pages        = {104378},
  year         = {2023},
  url          = {https://doi.org/10.1016/j.robot.2023.104378},
  doi          = {10.1016/J.ROBOT.2023.104378},
  bibsource    = {dblp computer science bibliography, https://dblp.org}
}

@article{CFPS12, 
author="Mark Cieliebak and Paola Flocchini and Giuseppe Prencipe and Nicola Santoro", 
title="Distributed computing by mobile robots: gathering", 
journal="SIAM J. Comput.",
volume="41",
number="4",
pages="829--879",
year="2012",  
doi="10.1137/100796534"
}

@article{DFPSY16,
author="Shantanu Das and Paola Flocchini and Giuseppe Prencipe and 
Nicola Santoro and Masafumi Yamashita",  
title="Autonomous mobile robots with lights",
journal="Theor. Comput. Sci",
volume="609",
pages="171--184",
year="2016",
doi="10.1016/j.tcs.2015.09.018"
}

@InProceedings{DPV10, 
author="Dieudonn{\'e}, Yoann
and Petit, Franck
and Villain, Vincent",
title="Leader Election Problem versus Pattern Formation Problem",
booktitle="Proceedings of the 24th International Symposium on Distributed Computing
Distributed Computing",
year="2010",
pages="267--281",
doi="10.1007/978-3-642-15763-9_26"
}

@article{FPSW08,
author="Paola Flocchini and Giuseppe Prencipe and 
Nicola Santoro and Peter Widmayer", 
title="Arbitrary pattern formation by asynchronous, anonymous, oblivious 
robots",
journal="Theor. Comput. Sci.",
volume="407",
pages="412--447",
year="2008",
doi="10.1016/j.tcs.2008.07.026"
}

@article{FYOKY15,
author="Nao Fujinaga and Yukiko Yamauchi and Hirotaka Ono and 
Shuji Kijima and Masafumi Yamashita", 
title="Pattern formation by oblivious asynchronous mobile robots",
journal="SIAM J. Comput.",
volume="44",
number="3",
pages="740--785",
year="2015", 
doi="10.1137/140958682"
}

@InProceedings{HHK24,
author={Hahn, Christopher and Harbig, Jonas and Kling, Peter},
title={{Forming Large Patterns with Local Robots in the OBLOT Model}},
booktitle={Proceedings of the 3rd Symposium on Algorithmic Foundations of Dynamic Networks (SAND 2024)},
pages={14:1--14:20},
year={2024},
doi={10.4230/LIPIcs.SAND.2024.14}
}

@article{SY99,  
author="Ichiro Suzuki and Masafumi Yamashita", 
title="Distributed anonymous mobile robots: Formation of geometric patterns", 
journal="SIAM J. Comput.", 
volume="28", 
number="4", 
pages="1347--1363", 
year="1999", 
doi="10.1137/S009753979628292X"
}

@article{YS10, 
author="Masafumi Yamashita and Ichiro Suzuki", 
title="Characterizing geometric patterns formable by oblivious anonymous mobile robots", 
journal="Theor. Comput. Sci.", 
volume="411", 
pages="2433--2453", 
year="2010", 
doi="10.1016/j.tcs.2010.01.037"
}

@article{YUKY17,
author="Yukiko Yamauchi and Taichi Uehara and Shuji Kijima and Masafumi Yamashita",
title="Plane Formation by Synchronous Mobile Robots in the Three Dimensional Euclidean Space", 
journal="Journal of the ACM", 
volume="64",
issue="3",
pages="16:1--16:43", 
year="2017",
doi="10.1145/3060272"
}

@inproceedings{YY13, 
author="Yukiko Yamauchi and Masafumi Yamashita", 
title="Pattern formation by mobile robots with limited visibility",
booktitle="Proceedings of the 20th International Colloquium on
Structural Information and Communication Complexity (SIROCCO 2013)", 
publisher="Springer International Publishing", 
pages="201--212",
year="2013", 
doi="10.1007/978-3-319-03578-9_17"
}
\end{document}